\documentclass[journal]{IEEEtran}
\usepackage{amsmath,amssymb,amstext,graphicx,cite}

\newtheorem{Def}{\bf Definition}
\newtheorem{Rmk}{\bf Remark}
\newtheorem{Exam}{\bf Example}
\newtheorem{Assn}{\bf Assumption}
\newtheorem{Thm}{\bf Theorem}
\newtheorem{Lem}{\bf Lemma}
\newtheorem{Cor}{\bf Corollary}
\newcommand{\HH}{\mathcal{H}}
\newcommand{\MM}{\mathcal{M}}
\newcommand{\CC}{\mathcal{C}}
\newcommand{\C}{\rm c}
\newcommand{\cc}{_{\rm c}}
\newcommand{\rr}{_{\rm r}}
\newcommand{\pp}{_{\rm p}}

\newcommand{\DD}{\mathcal{D}}
\newcommand{\EE} {\mathcal{E}}

\newcommand{\R}{\mathbb{R}}

\newcommand{\End}{\square}
\newcommand{\GG}{\mathcal{G}}
\newcommand{\bs}{\backslash}
\newcommand{\p}{\partial}
\newcommand{\dd}{{\rm d}}

\newcommand{\KK}{\mathcal{K}}
\newcommand{\VV}{\mathcal{V}}
\newcommand{\WW}{\mathcal{W}}

\newcommand{\OO}{\mathcal{O}}
\newcommand{\BB}{\mathcal{B}}

\newcommand{\RR}{\mathcal{R}}

\newenvironment{proof}{{\em Proof.}}{\hfill $\blacksquare$}

\begin{document}

\title{State Estimation and Tracking Control for Hybrid Systems by Gluing the Domains}
\author{Jisu~Kim,~Hyungbo~Shim,~and Jin Heon Seo%
\thanks{Preliminary versions of this paper have been presented in \cite{JS14} and \cite{JS16}.}%
\thanks{This research was supported in part by a grant to Bio-Mimetic Robot Research Center funded by Defense Acquisition Program Administration, and by Agency for Defense Development (UD160027ID), and in part by the National Research Foundation of Korea (NRF) grant funded by the Korea government (Ministry of Science and ICT) (No. NRF-2017R1E1A1A03070342).}%
\thanks{J. Kim, H. Shim, and J. H. Seo are with ASRI, Dept.~of Electrical and Computer Engineering, Seoul National University, Korea.
{\tt\small \{wltnqhdn, hshim, jhseo\}@snu.ac.kr}}%
}

\maketitle

\begin{abstract}
We study the design problems of state observers and tracking controllers for a class of hybrid systems whose state jumps. 
The idea is to utilize the well-known method of gluing the jump set (a part of domain where the jumps take place) onto its image, which converts the hybrid system into a continuous-time system whose state does not jump. 
Sufficient conditions for this idea to be implemented are listed and discussed with a few concrete examples.
In particular, we present a structural condition for an observer design, and, for tracking control, we introduce a feedback to compensate residual discontinuity in the vector field after gluing.
The benefits of the proposed approach include that the observer design does not require detection of the state jumps, and that the tracking control does not require the plant state jumps when the reference jumps.
\end{abstract}

\begin{IEEEkeywords}
hybrid system, observer design, tracking control 
\end{IEEEkeywords}

\section{Introduction}

A dynamical system that exhibits both continuous and discrete dynamic behaviors is called a hybrid system.
The modeling frameworks that involve the continuous-time dynamics, the discrete events, and the interaction between them have been studied in, e.g., \cite{JL03,GS09}. 
In these frameworks, the state estimation and the control problems have been studied for special classes of hybrid systems such as switched systems \cite{SG12,BM12}, mechanical systems with impacts \cite{BB00,LM011,AT13,Men16}, and polyhedral billiards with impacts \cite{FT13}. 

However, many existing observer designs require that the jump times of the observer state should coincide with those of the plant state (see, e.g.,\cite{LM011,AT13,Za14}).
Similarly, in many existing reference tracking controllers, the jump times of the plant state and the reference trajectory should coincide (see, e.g.,\cite{BB00,RG14,MR11}). 
This is because, in particular, if two states of the plant and the estimator/reference do not jump at the same time when they are close to each other, then the estimation/tracking error may become abruptly large immediately after one of them jumps far from the other while the other is yet to jump.
When this occurs, it is not straightforward for the estimator/controller to determine its action.
In order to synchronize their jump times, the state estimator needs the jump time information of the plant, and the tracking controller needs to have a method to enforce the plant state to jump whenever the reference jumps.
Unfortunately, both of them are not easy in some practices.

An idea to overcome this difficulty is to re-think the distance between two states.
One way to do this is to allow comparison of two states at different times; for example, by taking the minimal distance between two trajectories ``around'' one's jump time, unnecessarily large error can be avoided.
See \cite{BA02,GS09} for this line of thought.
Another way is to take the jump map into account when defining the distance; that is, if two states are related by the jump map, their distance is considered to be zero.
Related results are found in \cite{FT13,BBJ13}.
Similar, but more radical idea is to deform the underlying domain of the hybrid system in the state-space, so that two different states are glued into one if they are related by the jump map.
To the best of the authors' knowledge, application of the gluing idea to the hybrid dynamic systems was firstly studied in \cite{SS05}, where the notion of `hybrifold' is introduced.
This idea is then utilized for state estimation problem in \cite{AS09}.
In this paper, we continue the study of \cite{AS09} by applying the idea for ``designing'' state observers and tracking controllers.
In particular, the jump set---where the jumps take place---is glued onto its image---the destination of the jump---through the jump map.
Then, because the system becomes a continuous-time system without any jumps of the state, the conventional observer or controller design techniques may be applied.
For this, we refine the notion of ``gluing function'' in order to be used for estimation and tracking control problems (Section \ref{Section: Preliminaries}).
We also discuss the need for matching vector fields before and after the jump.
Then, we present a structural condition for a class of hybrid systems that guarantee the matching condition in the case of observer design problem (Section \ref{Section III}), and a method to satisfy the matching condition {\em by feedback} in the case of tracking control problem (Section \ref{Section IV}).
As a result, jump time detection is not necessary for observer design, and the tracking controller does not need to make the plant state jump whenever the reference jumps. 
Three concrete examples demonstrate effectiveness of the proposed approaches.

{\em Notation:}
The zero vector in $\R^n$ is denoted by $0_n$. 
Given $x\in \R^n$, $|x|$ denotes the Euclidean vector norm and $d_\CC(x)$ means the distance of $x$ to a subset $\CC\subset\R^n$, i.e., $d_\CC(x) := \inf_{y\in \CC}|x-y|$. 
An open ball at $x\in\R^n$ with a radius $\epsilon>0$ is denoted by $\BB_x(\epsilon)$. 
For given column vectors $x$ and $y$, we define $(x,y) := [x^\top~y^\top]^\top$.
For a subset $\CC\subset \R^n$ and a $C^1$ function $\psi:\CC\rightarrow\R^m$, $\dd\psi$ denotes the Jacobian matrix of $\tilde \psi$ where $\tilde \psi$ is an extension of $\psi$ to some open neighborhood of $\CC$ and is continuously differentiable. 
For a given differentiable function $r:\R^n\rightarrow \R$, $\nabla r$ is the gradient of $r$, i.e., $\nabla r:=[\p r/\p x_1, \cdots, \p r/\p x_n]$. 
The symbol $\varnothing$ implies the null set.

\section{Gluing the Domain}\label{Section: Preliminaries}

\subsection{Standing Assumptions}

We first introduce our framework for hybrid systems. 
The following definition is motivated by \cite{GS09,BBJ13,JL03}. 
A {\em hybrid system} is a 4-tuple $\HH=(\CC,f,\DD,g)$ represented by
\begin{align*}
\HH: \left\{
\begin{array}{lll}
    \dot x &\!\!\!\!= f(x,u)   &\quad \hbox{when $x\in\CC \subset \R^n$} \\
     x^{+} &\!\!\!\!= g(x)   &\quad \hbox{when $x\in\DD \subset \R^n$}
\end{array}
\right.
\end{align*}
with $x \in \R^n$ and $u \in \R^p$ being the state and the input, respectively.
The differential equation, governing the continuous-time evolution of the state $x$ when $x$ belongs to the {\it flow set} $\CC \subset\R^n$, and the difference equation, determining the discrete update of $x$ when $x$ belongs to the {\it jump set} $\DD \subset \R^n$, are given by the {\it flow map} $f:\CC \times \R^p \to \R^n$ and the {\it jump map} $g: \DD \to \R^n$, respectively. 

We now recall several definitions regarding the solutions to the hybrid systems, adopted from \cite{JL03}.
A {\em hybrid time trajectory} is a finite or infinite sequence of intervals $\tau=\{I_i\}_{i=0}^N$ ($N$ can be $\infty$) such that
\begin{itemize}
\item $I_i=[\tau_i,\tau_i']$ with $\tau_i\leq\tau_i'=\tau_{i+1}$ for all $0\leq i<N$; in particular, $\tau_0=0$;
\item when $N<\infty$, either $I_N=[\tau_N,\tau_N']$ or $I_N=[\tau_N,\tau_N')$.
\end{itemize}
For $\tau=\{I_i\}_{i=0}^N$, let $\langle\tau\rangle := \{0,1,\ldots,N\}$ (possibly $N=\infty$), and $|\tau| := \sum_{i\in\langle\tau\rangle}(\tau_i'-\tau_i)$.
We say that $\tau=\{I_i\}_{i=0}^N$ is a {\em prefix} of $\tilde\tau=\{{\tilde I_i}\}_{i=0}^{\tilde N}$ and write $\tau \sqsubseteq \tilde\tau$, either if they are identical or if $N$ is finite, $N \leq \tilde N$, $I_i = \tilde I_i$ for all $0 \leq i < N$, and $I_N \subseteq \tilde I_N$. 
With $u=u(t,x)$, an {\em execution of $\HH$} excited by an initial condition $x_0\in\CC\cup \DD$ is a pair $\chi=(\tau,\xi)$ where $\tau$ is a hybrid time trajectory and $\xi = \{\xi^i : i \in \langle \tau \rangle\}$ is a collection of absolutely continuous maps $\xi^i : I_i \rightarrow \CC$ such that
\begin{itemize}
\item $\xi^0(0) = x_0$;
\item $\dot \xi^i(t) = f(\xi^i(t),u(t,\xi^i(t)))$ for all $i\in\langle\tau\rangle$ and for almost all $t\in(\tau_i,\tau_i')$;
\item $g(\xi^i(\tau_i')) = \xi^{i+1}(\tau_{i+1})$ for all $i\in\langle\tau\rangle\bs\{N\}$.
\end{itemize}
We say that an execution $\chi=(\tau,\xi)$ of $\HH$ is a {\em prefix} of another execution $\tilde\chi=(\tilde\tau,\tilde \xi)$ of $\HH$ and write $\chi\sqsubseteq\tilde\chi$, if $\tau\sqsubseteq\tilde\tau$ and $\xi^i(t)=\tilde \xi^i(t)$ for all $i\in\langle\tau\rangle$ and $t\in I_i$. 
We say that $\chi$ is a {\em strict prefix} of $\tilde\chi$, if $\chi\sqsubseteq\tilde\chi$ and $\chi\neq\tilde\chi$. 
An execution is called {\em maximal} if it is not a strict prefix of any other executions. 
An execution is called {\em infinite} if either $N=\infty$ or $|\tau|$ is not finite. 
Otherwise, it is called {\em finite}. 
We say an execution is {\em infinite in $t$-direction} when $|\tau|$ is not finite.
For each maximal execution, a ``{\em state trajectory}'' $x:[0,|\tau|)\rightarrow\CC$ is given by 
$$x(t) := \xi^{i(t)}(t) \qquad \text{for each $t\in[0,|\tau|)$}$$
where $i:[0,|\tau|)\to\langle\tau\rangle$ is the function satisfying $t\in[\tau_{i(t)},\tau_{i(t)}')$. 
Note that $i(t)$ is uniquely defined for each $t\in[0,|\tau|)$.	 

In this paper, we consider a plant $\HH_{\rm p}$ satisfying the following assumptions. 

\begin{Assn}\label{Ass:standing ass}
The plant $\HH_{\rm p}=(\CC,f,\DD,g)$ satisfies that\\
\noindent{\bf (A1)} $\CC\subset\R^n$ is a $k$-dimensional smooth submanifold with boundary\footnote{
We follow the convention of the terminology in \cite{JML03}. In this regard, every point of $\CC$ has an open neighborhood in $\CC$ diffeomorphic to an open subset of $\R^k$ or of a half-space $\R^{k-1} \times [0,\infty)$, and the boundary $\partial \CC$ is the set of points satisfying only the latter (i.e., points whose neighborhoods are diffeomorphic to only the half-space).}; if $k<n$, there exists a smooth map $r_\CC:\R^n\to \R^{n-k}$ such that $r_\CC(x)=0_{n-k}$ and $\dd r_\CC(x)$ has full row rank for all $x\in \CC$; \\
\noindent{\bf (A2)} $f$ is smooth; if $k<n$, then $\dd r_\CC (x) f(x,u)=0$ for all $(x,u)\in \CC\times \R^p$; \\
\noindent{\bf (A3)} $\DD$ consists of connected components of the boundary $\partial \CC$ of $\CC$; \\	
\noindent{\bf (A4)} $g$ can be extended to a diffeomorphism $\tilde g :\tilde \DD \to \tilde g(\tilde \DD)$ where $\tilde \DD$ is an open neighborhood of $\DD$ in $\R^n$.
Furthermore, $\GG := g(\DD)$ consists of connected components of $\partial \CC$ such that $\DD \cap \GG =\varnothing$ and $\DD\cup\GG=\partial \CC$. 
Finally, there exist smooth maps $r_\DD:\R^n\to\R$ and $r_\GG:\R^n\to\R$ satisfying that $\DD=\{x\in \CC : r_{\DD}(x)=0\}$, $\GG=\{x\in \CC : r_{\GG}(x)=0\}$, $\CC \subset \{x \in \R^n : (r_{\DD}(x) \leq 0) \;\&\; (r_{\GG}(x) \geq 0) \}$ where $\dd r_\DD(x)$ and $\dd r_\GG(x)$ have full row rank for all $x$ in $\DD$ and $\GG$, respectively.	\hfill $\End$	
\end{Assn}

Under (A4), the state jumps occur from some part of the boundary to another part of the boundary. Moreover, it does not allow the consecutive jumps because $\DD \cap\GG= \varnothing$.

\subsection{Gluing Function}

Under Assumption \ref{Ass:standing ass}, the state trajectory $x(t)$ of $\HH\pp$ excited at $x_0$ is right-continuous in $t$, but it is not left-continuous. 
In order to deal with the discontinuities, the following gluing function is introduced.

\begin{Def}\label{Definition: Gluing Function}
A function $\psi:\CC\rightarrow\R^m$ $(m \geq k)$ is called a {\em gluing function} of the system $\HH_{\rm p}=(\CC,f,\DD,g)$ if it satisfies \\
\noindent{\bf (G1)} $\psi(x)=\psi(g(x))$ for all $x\in\DD$;\\
\noindent{\bf (G2)} $\psi|_{\CC\bs\DD}$, which is $\psi$ restricted on the domain $\CC\bs\DD$ (set subtraction), is injective;\\
\noindent{\bf (G3)} $\psi$ is of class $C^1$ (continuously differentiable);\\
\noindent{\bf (G4)} if $k=n$, ${\rm rank} (\dd\psi(x))=n$ for all $x\in\CC$; if $k<n$, $\dd\psi(x)v\neq 0_m$ for each $x\in\CC$ and each non-zero vector $v\in \ker(\dd r_\CC(x))$;\\ 
\noindent{\bf (G5)} for every compact set $K\subset \psi(\CC)$, the preimage $\psi^{-1}(K)$ is compact.\\
We call $\psi(\CC)$ a {\em glued domain} and denote it by $\CC^\psi$.   \hfill $\End$
\end{Def}

\begin{Rmk}\label{Rem:gluing function}
By (G1), the gluing function does glue the boundary $\DD$ of $\CC$ to the boundary $\GG$ of $\CC$. 
Furthermore, by (G2), $\psi$ is a natural projection. 
We impose (G4) to guarantee that, for each $x\in\CC$, the map $v \mapsto \dd \psi(x)v$ from the tangent space at $x$ is injective. 
Finally, (G5) implies that, if a sequence $\{x_i\}$ escapes to infinity in $\CC$, i.e., if no compact set in $\CC$ contains infinitely many terms of $\{x_i\}$, then the sequence $\{\psi(x_i)\}$ also escapes to infinity in $\psi(\CC)$.
\hfill $\End$	
\end{Rmk}

\begin{Rmk}\label{rem:2}
Under Assumption \ref{Ass:standing ass} for $\HH\pp$, there always exists a gluing function \cite[Theorem 9.29]{JML03}. 
In fact, the gluing function of Definition \ref{Definition: Gluing Function} is a proper $C^1$ embedding of a smooth manifold $\CC/\!\sim$ which is the quotient space of $\CC$ determined by the relation $x \sim g(x)$. 
Note that since a smooth structure on $\CC/\!\sim$ exists, we can regard $\CC/\!\sim$ as a smooth $k$-manifold. 
Since the natural projection $\pi$ sending each point of $\CC$ to the equivalence class is proper under Assumption~\ref{Ass:standing ass} and, by Whitney Embedding Theorem, the smooth $k$-manifold $\CC/\!\sim$ admits a proper smooth embedding $\iota$ into $\R^{2k+1}$, we can take a gluing function as $\psi:=\iota\circ\pi$. 
However, constructing a gluing function in this way is a daunting task  and the actual gluing function is often found by intuition and trial-and-error.
See the examples in Sections \ref{Section III} and \ref{Section IV}. \hfill $\End$
\end{Rmk}

\begin{Lem}\label{Lem: local embedding of GF}
Suppose there exists a gluing function $\psi$ for $\HH\pp$. 
Then, for any compact set $\MM \subset \CC$ satisfying that $\MM\cap \DD=\varnothing$ or $\MM\cap \GG=\varnothing$, it holds that $\psi$ is injective on $\MM$ and there exists $L>0$ such that, for all $(x,y)\in \MM \times \MM$,
\begin{align*}
|x-y|\leq L|\psi(x)-\psi(y)|.
\end{align*}
\end{Lem}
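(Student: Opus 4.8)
The plan is to establish the two assertions in turn. For injectivity on $\MM$, note that the only obstruction to global injectivity of $\psi$ is the identification $\psi(x)=\psi(g(x))$, $x\in\DD$, imposed by (G1), since $\psi$ is injective on $\CC\bs\DD$ by (G2). Hence if $\MM\cap\DD=\varnothing$, then $\MM\subset\CC\bs\DD$ and injectivity on $\MM$ is immediate from (G2). If instead $\MM\cap\GG=\varnothing$, I would take $x,y\in\MM$ with $\psi(x)=\psi(y)$ and argue by cases: if $x,y\notin\DD$, apply (G2) directly; if $x\in\DD$, then $\psi(g(x))=\psi(y)$ with $g(x)\in\GG\subset\CC\bs\DD$ by (A4), so either $y\notin\DD$ and (G2) forces $g(x)=y\in\GG$, contradicting $\MM\cap\GG=\varnothing$, or $y\in\DD$ and (G2) together with injectivity of the extension $\tilde g$ (from (A4)) gives $g(x)=g(y)$, hence $x=y$. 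In all cases $\psi|_\MM$ is injective.

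For the inequality, I would first prove a local version: each $x^*\in\MM$ has an open neighborhood $W_{x^*}\subset\R^n$ and a constant $c_{x^*}>0$ such that $|\psi(x)-\psi(y)|\ge c_{x^*}|x-y|$ for all $x,y\in\CC\cap W_{x^*}$. To avoid manifold charts and to handle boundary points of $\CC$ uniformly, the device is the $C^1$ map $\Phi$ defined on an open neighborhood of $\CC$ in $\R^n$ by $\Phi:=(\tilde\psi,r_\CC)$ when $k<n$ (with $\tilde\psi$ the $C^1$ extension of $\psi$ from the Notation and $r_\CC$ as in (A1)) and $\Phi:=\tilde\psi$ when $k=n$. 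Using (A1) and (G4) — in the form that $\dd\psi(x^*)$ is injective on $\ker\dd r_\CC(x^*)$ when $k<n$, respectively has rank $n$ when $k=n$ — one checks that the linear map $\dd\Phi(x^*)$ is injective: $\dd\Phi(x^*)v=0$ forces $\dd r_\CC(x^*)v=0$ and $\dd\psi(x^*)v=0$, hence $v=0$. Composing $\Phi$ with a left inverse of $\dd\Phi(x^*)$ produces a map with invertible differential at $x^*$, so by the inverse function theorem $\Phi$ is bi-Lipschitz on a suitable $W_{x^*}$; since the last $n-k$ coordinates of $\Phi$ vanish on $\CC$ (vacuously if $k=n$), for $x,y\in\CC\cap W_{x^*}$ we obtain $|\psi(x)-\psi(y)|=|\Phi(x)-\Phi(y)|\ge c_{x^*}|x-y|$.

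To globalize, I would extract a finite subcover of the compact $\MM$ from $\{W_{x^*}\}_{x^*\in\MM}$ and let $\epsilon>0$ be a Lebesgue number for it; then $x,y\in\MM$ with $|x-y|<\epsilon$ lie in a common $W_{x^*}$, so $|\psi(x)-\psi(y)|\ge c|x-y|$ with $c:=\min c_{x^*}>0$ taken over the finite subcover. On the compact set $\{(x,y)\in\MM\times\MM:|x-y|\ge\epsilon\}$ the continuous function $(x,y)\mapsto|\psi(x)-\psi(y)|$ is strictly positive by the injectivity just shown, hence bounded below by some $\delta>0$, and there $|x-y|\le\mathrm{diam}(\MM)\le(\mathrm{diam}(\MM)/\delta)\,|\psi(x)-\psi(y)|$; taking $L:=\max\{1/c,\,\mathrm{diam}(\MM)/\delta\}$ gives the claim (the cases $\MM=\varnothing$ or a singleton being trivial). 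I expect the only delicate point to be the local step — converting the tangential nondegeneracy (G4) into genuine injectivity of a full ambient differential and extracting a bi-Lipschitz bound valid also on $\p\CC$ — which is precisely what $\Phi$ and the inverse function theorem take care of; the remainder is bookkeeping with (G1), (G2), (A4) and a routine compactness argument.
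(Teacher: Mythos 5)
Your proposal is correct; both parts go through. The injectivity argument is essentially the paper's own case analysis (reduce to (G2) when $\MM\cap\DD=\varnothing$; when $\MM\cap\GG=\varnothing$, push points of $\DD$ through $g$ into $\GG\subset\CC\bs\DD$ and use (G2) together with injectivity of $g$ from (A4)). For the inequality, however, you take a genuinely different route. The paper argues by contradiction: it assumes sequences $x_i\neq y_i$ in $\MM$ with $|\psi(x_i)-\psi(y_i)|/|x_i-y_i|\to 0$, uses Bolzano--Weierstrass and injectivity to force a common limit $x'$, shows via first-order Taylor expansion that the normalized differences $w_i$ satisfy $\dd\psi(x')w_i\to 0$ and $\dd r_\CC(x')w_i\to 0$, and then projects $w_i$ onto $\ker(\dd r_\CC(x'))$ to extract a nonzero limit $v^*\in\ker(\dd r_\CC(x'))$ with $\dd\psi(x')v^*=0$, contradicting (G4) -- a single global argument needing only sequential compactness. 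You instead prove a local lower bound first, by augmenting $\tilde\psi$ with $r_\CC$ so that $\dd\Phi(x^*)$ becomes injective on all of $\R^n$ (this is exactly where (G4) and (A1) enter), invoking the inverse function theorem through a left inverse of $\dd\Phi(x^*)$ to get a local bi-Lipschitz bound, and then globalizing with a Lebesgue number for a finite subcover plus a separate compactness bound for pairs with $|x-y|\geq\epsilon$. Your version makes the role of (G4) more transparent (it is precisely injectivity of the ambient differential $\dd\Phi$) and produces explicit local constants, at the cost of the inverse function theorem and the local/global bookkeeping; the paper's version is shorter and avoids the IFT but hides the geometry in the projection of the $w_i$ onto the tangent space. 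Both are valid proofs of the statement.
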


\begin{proof}
	See the Appendix.
\end{proof}

When there is a gluing function $\psi$ of $\HH\pp$, the state trajectory $x(t)$ can be expressed on the glued domain $\CC^\psi$ as
\begin{align*}
\zeta(t) := \psi(x(t)) \quad \in\CC^\psi.
\end{align*}
Since all the discontinuities of $x(t)$ are ``glued" by (G1), $\zeta(t)$ is continuous with respect to $t$. In this sense, we call $\zeta(t)$ the {\it glued trajectory} of $x(t)$ by $\psi$.

We now define the {\it inverse gluing function} induced from $\psi$. 
As a matter of fact, $\psi^{-1}$ on $\psi(\CC)$ cannot be defined due to the loss of injectivity by (G1). 
However, noting that $\psi|_{\CC\bs\DD}$ is bijective onto its image $\psi(\CC\bs\DD)$ by (G2) and that $\CC^\psi = \psi(\CC) = \psi(\CC\bs\DD)\cup\psi(\DD) = \psi(\CC\bs\DD)$ because $\psi(\DD) = \psi(\GG)$ by (G1) and $\psi(\GG)\subset\psi(\CC\bs\DD)$ by $\DD\cap\GG = \varnothing$, we can instead define the inverse gluing function from $\CC^\psi$ to $\CC$ by (we abuse notation by writing it as $\psi^{-1}$)
\begin{align} \label{eq: Inverse Function of Gluing Function}
\psi^{-1}(\zeta) := \psi|_{\CC\bs\DD}^{-1}(\zeta) \quad \hbox{for all $\zeta\in\CC^\psi$}.
\end{align}

\section{State Estimation by Gluing}\label{Section III}

In this section, we propose a state estimation strategy using the gluing function, which does not require any detection of the time instants when the state jumps. 

\subsection{Standing Assumptions}

To deal with the state estimation problem, we define the plant $\HH\pp=(\CC,f,\DD,g)$ with an output $y=h(x)$ as $\HH^h\pp:=(\CC,f,\DD,g,h)$ where $h:\CC\to \R^q$ is an {\it output function}, and assume that $f$ is independent of $u$. 
We impose the following additional assumptions on $\HH\pp^h$ under consideration. 

\begin{Assn}\label{Ass:basic estimation}
$\HH\pp^h=(\CC,f,\DD,g,h)$ satisfies that \\ 
\noindent{\bf (E1)} there is a compact subset $\EE\subset \CC$ such that
$$x(0) \in \EE \quad \Rightarrow \quad x(t) \in \EE \text{ for all } t \geq 0;$$
\noindent{\bf (E2)} $\left\{ \begin{array}{ll} \nabla r_\DD (x)\cdot f(x)>0& \hbox{for all } x\in\DD, \\
\nabla r_\GG (x)\cdot f(x)>0& \hbox{for all } x\in\GG; \end{array}\right.$\\
\noindent{\bf (E3)} $h$ is continuous and
\begin{align}\label{eq:output matching condition}
h(x) = h(g(x)) \text{ for all } x\in\DD.
\end{align}
\hfill $\End$
\end{Assn}

From (A1--4) and (E1--2), a state trajectory $x(t)$ of $\HH\pp^h$ starting in $\EE$ is uniquely well-defined for all $t \ge 0$. 
And, by (E3), there is no jump in the output value when the state jump occurs, which actually makes it difficult to detect the state jumps by merely observing the output.
We call \eqref{eq:output matching condition} the {\it output matching condition}.
Finally, (E2) implies that the states cannot stay in $\DD$ and $\GG$ during the continuous-time evolution, and in fact, they cannot stay ``near'' $\DD$ and $\GG$ as seen in the following lemma.

\begin{Lem} \label{Lem: Existence of alpha and beta}
Let $\epsilon$-neighborhoods of $\GG$ and $\DD$ in $\EE$ be defined as $\OO_\GG(\epsilon) := \{x\in\EE:d_{\GG\cap \EE}(x)<\epsilon\}$ and $\OO_{\DD}(\epsilon) := \{x\in\EE : d_{\DD\cap \EE}(x) < \epsilon \}$, respectively.
Then, for a plant $\HH\pp^h$ satisfying (A1--4) and (E1--2), there exists a class-$\KK$ function $\alpha$ such that the solution $x(t)$ starting in $\EE$ belongs to $\OO_\GG(\epsilon)$ or $\OO_\DD(\epsilon)$ only if $\tau_i - \alpha(\epsilon) < t < \tau_i + \alpha(\epsilon)$ for some $i$ where $\tau_i$ is a jump instant corresponding to $x(t)$; in other words,
$$x(t) \notin \OO_\GG(\epsilon)\cup\OO_\DD(\epsilon), \;
\forall t \in \tau_\alpha(\epsilon) := \R_{\geq0} \bs \cup_{i=0}^{N} \BB_{\tau_i}(\alpha(\epsilon)).$$
\end{Lem}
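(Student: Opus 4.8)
\emph{The plan} is to read condition (E2) geometrically. By (A4), $r_\DD\le 0$ and $r_\GG\ge 0$ on $\CC$, with $\DD=\{x\in\CC:r_\DD(x)=0\}$ and $\GG=\{x\in\CC:r_\GG(x)=0\}$, so (E2) says that along any flowing solution the scalar $r_\GG(x(t))$ strictly increases whenever $x(t)$ is near $\GG$, while $r_\DD(x(t))$ strictly increases toward the value $0$ whenever $x(t)$ is near $\DD$. Hence a solution can be near $\GG$ only shortly after it has jumped onto $\GG$, and near $\DD$ only shortly before it is forced to jump; in either case it cannot linger there, and the aim is to turn this into a quantitative bound on how far in time $t$ can be from a jump instant.

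First I would record two routine estimates. Since $r_\GG$ and $r_\DD$ are smooth and $\EE$ is compact (E1), they are Lipschitz on a compact neighbourhood of $\EE$ with some common constant $L>0$; using $r_\GG|_\GG=0$ and $r_\GG\ge0$ on $\CC$ this gives $x\in\OO_\GG(\epsilon)\Rightarrow 0\le r_\GG(x)<L\epsilon$, and symmetrically $x\in\OO_\DD(\epsilon)\Rightarrow -L\epsilon<r_\DD(x)\le0$. Next, $x\mapsto\nabla r_\GG(x)\cdot f(x)$ is continuous and, by (E2), strictly positive on the compact set $\GG\cap\EE=\{x\in\EE:r_\GG(x)=0\}$, hence bounded below there by some $2m_\GG>0$; a Bolzano--Weierstrass argument on $\EE$ (if the bound $\ge m_\GG$ were to fail on $\{x\in\EE:r_\GG(x)\le\delta\}$ for every $\delta>0$, a subsequential limit of the violators would lie in $\GG\cap\EE$ yet violate the bound $2m_\GG$ there) then produces $\delta_\GG>0$ such that $\nabla r_\GG(x)\cdot f(x)\ge m_\GG$ for every $x\in\EE$ with $r_\GG(x)\le\delta_\GG$. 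I would define $m_\DD,\delta_\DD>0$ analogously using $\nabla r_\DD\cdot f$ near $\DD$.

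The core is a barrier argument. Fix $\epsilon$ with $L\epsilon<\min\{\delta_\GG,\delta_\DD\}$. Suppose $x(t^*)\in\OO_\GG(\epsilon)$ and let $[\tau_j,\tau_j')$ be the flow interval containing $t^*$, on which the solution is $C^1$ and stays in $\EE$. I claim $r_\GG(x(s))<\delta_\GG$ for all $s\in[\tau_j,t^*]$: if not, then since $r_\GG(x(t^*))<L\epsilon<\delta_\GG$ there is a \emph{last} time $s_0\in[\tau_j,t^*)$ with $r_\GG(x(s_0))=\delta_\GG$; but the function $s\mapsto r_\GG(x(s))$ has derivative $\nabla r_\GG(x(s_0))\cdot f(x(s_0))\ge m_\GG>0$ at $s_0$, so $r_\GG(x(s))>\delta_\GG$ for $s$ slightly beyond $s_0$ --- contradicting the choice of $s_0$. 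Hence the derivative of $s\mapsto r_\GG(x(s))$ is $\ge m_\GG$ on all of $[\tau_j,t^*]$, and since $x(\tau_j)\in\CC$ gives $r_\GG(x(\tau_j))\ge0$, integration yields $t^*-\tau_j\le r_\GG(x(t^*))/m_\GG<(L/m_\GG)\epsilon$. The $\DD$-side is the mirror image, run \emph{forward} in time: if $x(t^*)\in\OO_\DD(\epsilon)$ then $-L\epsilon<r_\DD(x(t^*))\le0$, and as long as $r_\DD(x(s))>-\delta_\DD$ the derivative of $s\mapsto r_\DD(x(s))$ is $\ge m_\DD>0$, so $r_\DD$ strictly increases and stays above $-\delta_\DD$; since $r_\DD\le0$ on $\CC$, it must reach $0$ at some $s^*$ with $s^*-t^*<(L/m_\DD)\epsilon$, and there the solution lies in $\DD$, where it must jump --- no jump having occurred on $(t^*,s^*)$ since $r_\DD<0$ there, $s^*$ is precisely the next jump instant. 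Taking $\alpha(\epsilon):=\max\{L/m_\GG,\,L/m_\DD\}\,\epsilon$, which is class-$\KK$, these two cases are exactly the assertion that $x(t)\in\OO_\GG(\epsilon)\cup\OO_\DD(\epsilon)$ implies $|t-\tau_i|<\alpha(\epsilon)$ for some $i$.

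The hard part is the barrier step: one must show that a solution found near $\GG$ at $t^*$ was in fact near $\GG$ throughout the \emph{entire} preceding part of its current flow interval, so that the uniform speed estimate applies on all of $[\tau_j,t^*]$ --- equivalently, that the level set $\{r_\GG=\delta_\GG\}$ is a one-way barrier for the flow. This is precisely where the sign in (E2) is used, together with the compactness of $\EE$, which is what upgrades the pointwise positivity of $\nabla r_\GG\cdot f$ on $\GG\cap\EE$ to a uniform lower bound on a whole sublevel set. The mirror argument on the $\DD$-side serves a second purpose besides bounding the time: it shows a solution near $\DD$ is genuinely forced to jump, so the ``next jump'' named in the statement does exist. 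The remaining ingredients --- the Lipschitz passage from the nonsmooth distances $d_{\GG\cap\EE}$, $d_{\DD\cap\EE}$ to the smooth $r_\GG$, $r_\DD$, and the bookkeeping at the flow-interval endpoints --- are routine, and one reads ``class-$\KK$'' on a neighbourhood of $0$ (extending $\alpha$ arbitrarily if the full domain is desired).
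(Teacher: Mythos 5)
Your proof is correct and follows essentially the same route as the paper's: both combine the Lipschitz estimate $0\le r_\GG(x)<L\epsilon$ on $\OO_\GG(\epsilon)$ (and its mirror for $\DD$) with a uniform positive lower bound on $\nabla r_\GG\cdot f$, resp.\ $\nabla r_\DD\cdot f$, near $\GG\cap\EE$, resp.\ $\DD\cap\EE$, obtained from (E2) and compactness, to produce a linear class-$\KK$ function $\alpha$. The only difference is organizational---the paper argues forward from each jump instant (escape from $\OO_\GG(\epsilon)$ within $(2L/\mu)\epsilon$ and then no return), whereas you argue from an arbitrary time in the neighborhood via a last-crossing barrier on the sublevel set $\{x\in\EE: r_\GG(x)\le\delta_\GG\}$, which is, if anything, slightly tidier.
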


\begin{proof}
	See the Appendix.
\end{proof}

\subsection{State Estimation by Gluing}

Under Assumption \ref{Ass:standing ass}, a gluing function $\psi$ exists for $\HH\pp^h$ (see Remark \ref{rem:2}).
Then, the vector field (i.e., the flow map) at $x$ of $\CC$ is naturally mapped to the vector field at $\zeta:=\psi(x)$ of $\CC^\psi$ by the linear map $\dd \psi(x)$. 
Due to (G1) and (G2), if $\zeta\in \CC^\psi\bs\psi(\DD)$, there exists a unique $x\in\CC$ such that $\psi(x) = \zeta$.
Therefore, we obtain $\dd \psi(x) f(x)$ as the vector field at $\zeta \in \CC^\psi \bs \psi(\DD)$.
However, if $\zeta \in \psi(\DD)$, there exist two points $x \in \DD$ and $x' := g(x) \in \GG$ such that $\zeta=\psi(x)=\psi(x')$ and we have $\dd \psi(x) f(x)$ and $\dd \psi(g(x)) f(g(x))$ as the candidates of the vector field at $\zeta$. 
This implies that the vector fields may not be glued at a point belonging to $\psi(\DD)=\psi(\GG)$ even if we glue the domain by $\psi$. 
This observation motivates us to propose the {\it vector field matching condition}: 
\begin{align}\label{eq: matching condition of estimation}
\dd\psi(x)f(x)=\dd \psi(g(x)) f(g(x)) \hbox{ for all $x\in \DD$}
\end{align}
which implies that the vector fields are glued on $\psi(\DD)$ by $\psi$. 
By the smoothing theorems in \cite[Theorem 8.2.1]{Hir76} or \cite[Lemma 3]{Sas15}, there exists a gluing function that satisfies \eqref{eq: matching condition of estimation} as well.

Since the vector field at $\zeta\in\psi(\DD)$ is now uniquely defined by \eqref{eq: matching condition of estimation}, the vector field at $\zeta\in\CC^\psi$ can be represented as $\dd \psi(x) f(x)$ where $\psi(x)=\zeta$. 
Accordingly, for all $\zeta \in \CC^\psi$, the corresponding vector field is obtained as 
\begin{align}\label{eq: flow after gluing}
f^\psi(\zeta) :=\dd \psi(\psi^{-1}(\zeta)) f(\psi^{-1}(\zeta))
\end{align} 
which is a function of $\zeta$. 
It is seen that $f^\psi$ is continuous on $\CC^\psi$.
Indeed, both $\dd\psi$ and $f$ are continuous on $\CC$ by (G3) and (A2), and $\psi^{-1}$ is continuous on $\CC^\psi \bs \psi(\DD)$ by Lemma \ref{Lem: local embedding of GF}.
Therefore, with \eqref{eq: matching condition of estimation}, the continuity of $f^\psi$ on $\CC^\psi$ follows.
Similarly, we obtain the output function on $\CC^\psi$ as 
\begin{align}\label{eq:output after gluing}
h^\psi(\zeta):=h(\psi^{-1}(\zeta)), \qquad \zeta\in\CC^\psi,
\end{align}
which is also continuous because of (E3).
Then, $\psi$ converts $\HH^h\pp$ into the following continuous-time system:
\begin{subequations}\label{System: on Glued Domain}
\begin{align}
\dot\zeta&=f^\psi(\zeta),  \qquad \zeta\in\CC^\psi\subset\R^m, \label{System: Dynamis on Glued Domain}\\
y&=h^\psi(\zeta) \label{System: Output on Glued Domain}.        
\end{align}
\end{subequations}
It trivially follows that $\zeta(t)=\psi(x(t))$ is the solution to \eqref{System: Dynamis on Glued Domain} starting from $\psi(x_0)\in\CC^\psi$. 
Furthermore, since $\EE$ is forward invariant set of $\HH\pp^h$, $x(t)$ is defined on $\CC$ for all $t\geq0$, which implies that $\psi(\EE)$ is forward invariant under $f^\psi$.

The system \eqref{System: on Glued Domain} is a continuous-time system without any discrete events and we call it a {\it glued system} of $\HH^h\pp$ by $\psi$. 
If the glued system is of the form that admits a conventional observer design method for continuous-time systems, then we can design an observer and obtain an estimate $\hat\zeta(t)$ for $\zeta(t)=\psi(x(t))$. 
If $\hat \zeta(t)$ belongs to $\CC^\psi$, then, through the inverse gluing function $\psi^{-1}:\CC^\psi\to \CC$ in \eqref{eq: Inverse Function of Gluing Function}, the estimate for $x(t)$ is obtained.
However, the estimate $\hat\zeta(t)$, which is defined on $\R^m$, may go out of the domain $\CC^\psi$ in general, for example, during the transient.
To deal with this issue, a projection map $\Pi_{ \psi(\EE)}:\R^m \to \psi(\EE)$ is introduced, which satisfies that
\begin{align*}
\Pi_{\psi(\EE)}(\hat\zeta) \in \{ \zeta' : \arg \min_{\zeta'\in \psi(\EE)}|\hat\zeta-\zeta'| \}. 
\end{align*} 
With $\Pi_{\psi(\EE)}$, the estimate of $x(t)$ is defined by
\begin{align*}
\hat x(t):=\psi^{-1}(\Pi_{\psi(\EE)}(\hat{\zeta}(t))).
\end{align*}
Obviously, the proposed idea does not require any detection of the jumps of $x(t)$. 
The following theorem justifies the idea.

\begin{Thm}\label{Theorem: State Estimation}
Suppose that Assumptions~\ref{Ass:standing ass}--\ref{Ass:basic estimation} hold and that there exists a gluing function $\psi$ satisfying \eqref{eq: matching condition of estimation}.
If there is an asymptotic observer for \eqref{System: on Glued Domain} in the sense that $\lim_{t \to \infty} |\zeta(t)-\hat \zeta(t)| = 0$, then, for any $\epsilon>0$, there is $T \geq 0$ (that may depend on the initial conditions of the plant and the observer) such that
\begin{equation*}
|x(t)-\hat x(t)|<\epsilon \qquad\text{for all $t\in\tau_\alpha(\epsilon)\cap(T,\infty)$} 
\end{equation*}
where $\tau_\alpha(\epsilon) = \R_{\geq0} \bs \cup_{i=0}^{N} \BB_{\tau_i}(\alpha(\epsilon))$ with $\tau_i$ being the jump time corresponding to $x(t)$ and $\tau_0=0$.
\end{Thm}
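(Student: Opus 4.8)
The plan is to transport the observer's convergence from the glued domain $\CC^\psi$ back to the state space $\CC$ through the local bi-Lipschitz estimate of Lemma~\ref{Lem: local embedding of GF}. That estimate holds only on compact sets avoiding $\DD$ or $\GG$, and Lemma~\ref{Lem: Existence of alpha and beta} is precisely what supplies such a set along the trajectory: outside the $\alpha(\epsilon)$-neighborhoods of the jump times the plant state stays bounded away from $\DD$ and $\GG$. This is why the error bound will hold exactly on $\tau_\alpha(\epsilon)$ --- the ``$\epsilon$'' in the theorem and the ``$\epsilon$'' in Lemma~\ref{Lem: Existence of alpha and beta} are forced to be the same.

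I would fix $\epsilon>0$, assume $x(0)\in\EE$ so that $x(t)\in\EE$ for all $t\ge0$ by (E1), and set $\KK_\epsilon:=\EE\bs(\OO_\GG(\epsilon)\cup\OO_\DD(\epsilon))$; this is compact (relatively closed in the compact $\EE$) and disjoint from both $\DD$ and $\GG$, since $\DD\cap\EE\subset\OO_\DD(\epsilon)$ and $\GG\cap\EE\subset\OO_\GG(\epsilon)$. By Lemma~\ref{Lem: Existence of alpha and beta}, $x(t)\in\KK_\epsilon$ for all $t\in\tau_\alpha(\epsilon)$, hence $\zeta(t)=\psi(x(t))\in\psi(\KK_\epsilon)$ there. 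Writing $\bar\zeta(t):=\Pi_{\psi(\EE)}(\hat\zeta(t))$, so that $\hat x(t)=\psi^{-1}(\bar\zeta(t))$ and $\psi(\hat x(t))=\bar\zeta(t)$, and using that $\bar\zeta(t)$ is a closest point of $\psi(\EE)$ to $\hat\zeta(t)$ while $\zeta(t)\in\psi(\EE)$, one gets $|\hat\zeta(t)-\bar\zeta(t)|\le|\hat\zeta(t)-\zeta(t)|$ and therefore $|\zeta(t)-\bar\zeta(t)|\le 2|\zeta(t)-\hat\zeta(t)|\to0$. If Lemma~\ref{Lem: local embedding of GF} could be applied to $x(t)$ and $\hat x(t)$ at once, it would give $|x(t)-\hat x(t)|\le L|\zeta(t)-\bar\zeta(t)|\le 2L|\zeta(t)-\hat\zeta(t)|$, and taking $T$ with $|\zeta(t)-\hat\zeta(t)|<\epsilon/(2L)$ for $t>T$ would finish the proof.

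The substantive step is thus to show that $\hat x(t)$ itself eventually stays in a fixed compact subset of $\CC$ avoiding $\DD$ and $\GG$; this cannot be skipped, because $\psi$ is only locally bi-Lipschitz --- a point of $\DD$ and its image in $\GG$ are far apart yet glue to the same point of $\CC^\psi$. Here I would argue: $\psi(\KK_\epsilon)$ is compact and disjoint from $\psi(\DD)=\psi(\GG)$, since by (G2) every $\psi$-preimage of a point of $\psi(\DD)$ lies in $\DD\cup\GG$, which $\KK_\epsilon$ avoids; because $\psi$ is a proper embedding (Remark~\ref{rem:2}), $\CC^\psi$ is closed in $\R^m$ and $\psi(\DD)$ is closed, so the distance $2\rho$ from $\psi(\KK_\epsilon)$ to $\psi(\DD)$ is positive. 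Then $N_\rho:=\{\zeta\in\CC^\psi:d_{\psi(\KK_\epsilon)}(\zeta)\le\rho\}$ is compact and disjoint from $\psi(\DD)$, and $\hat\KK:=\psi^{-1}(N_\rho)$ is a compact subset of $\CC$ avoiding $\DD$ and $\GG$: it lies in $\CC\bs\DD$ by definition of the inverse gluing function, $\psi^{-1}$ is continuous on $N_\rho\subset\CC^\psi\bs\psi(\DD)$ (Lemma~\ref{Lem: local embedding of GF}), and it meets $\GG$ only if $N_\rho$ meets $\psi(\GG)=\psi(\DD)$. Choosing $T$ also so large that $|\zeta(t)-\hat\zeta(t)|<\rho/2$ for $t>T$, for $t\in\tau_\alpha(\epsilon)\cap(T,\infty)$ we get $|\bar\zeta(t)-\zeta(t)|<\rho$ with $\zeta(t)\in\psi(\KK_\epsilon)$, hence $\bar\zeta(t)\in N_\rho$ and $\hat x(t)\in\hat\KK$; applying Lemma~\ref{Lem: local embedding of GF} to $\KK:=\KK_\epsilon\cup\hat\KK$ (which avoids $\DD$) then closes the argument as in the previous paragraph.

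I expect this last compactness bookkeeping --- confirming that the projected estimate $\hat x(t)$ cannot drift toward $\DD$ or $\GG$, which is what licenses the local bi-Lipschitz bound --- to be the only genuinely delicate point; it is where properness (G5) and the positive separation between $\psi(\KK_\epsilon)$ and $\psi(\DD)$ really enter. The projection inequality, the triangle inequalities, and the limit passage are all routine.
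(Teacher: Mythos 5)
Your argument reaches the right conclusion, and its main ingredients (Lemma~\ref{Lem: Existence of alpha and beta} to keep $x(t)$ in $\KK_\epsilon=\EE\bs(\OO_\DD(\epsilon)\cup\OO_\GG(\epsilon))$ on $\tau_\alpha(\epsilon)$, the projection inequality $|\zeta-\bar\zeta|\le 2|\zeta-\hat\zeta|$, and Lemma~\ref{Lem: local embedding of GF} to pull the bound back to $\CC$) are exactly those of the paper. Where you diverge is at the step you yourself flag as delicate: placing $x(t)$ and $\hat x(t)$ \emph{jointly} inside one compact set to which Lemma~\ref{Lem: local embedding of GF} applies. You do this by proving the stronger fact that $\hat x(t)$ eventually lies in a fixed compact set $\hat\KK$ avoiding both $\DD$ and $\GG$, via a positive separation between $\psi(\KK_\epsilon)$ and $\psi(\DD)$. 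The paper sidesteps this entirely: taking $\epsilon$ small enough that $\OO_\DD(\epsilon)\cap\OO_\GG(\epsilon)=\varnothing$ gives $\EE=(\EE\bs\OO_\DD(\epsilon))\cup(\EE\bs\OO_\GG(\epsilon))$, so $\hat x(t)\in\EE$ automatically lies in one of these two compact sets (the first missing $\DD$, the second missing $\GG$), while $x(t)$ lies in both on $\tau_\alpha(\epsilon)$; Lemma~\ref{Lem: local embedding of GF} then applies to the pair with $L=\max(L_1,L_2)$ without ever locating $\hat x(t)$. The paper's dichotomy is shorter; your route buys the extra information that the estimate eventually clusters near the true trajectory in $\CC$, which is essentially the content of Corollary~\ref{Cor:Graphcal sense}.

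One claim in your separation step is false as stated: (G5) is properness only with respect to compact subsets of $\psi(\CC)$, so it does not make $\CC^\psi$ closed in $\R^m$ --- in Example~\ref{Ex.GluedObsforBB}, $\CC^\psi$ accumulates at the origin, which it does not contain. Hence $\psi(\DD)$ need not be closed and your $N_\rho$ need not be compact. Both uses are repairable. First, the distance from the compact set $\psi(\KK_\epsilon)$ to $\psi(\DD)$ is still positive: if $x_i\in\DD$ with $\psi(x_i)\to\zeta^*\in\psi(\KK_\epsilon)$, then $\{\psi(x_i)\}\cup\{\zeta^*\}$ is a compact subset of $\psi(\CC)$, so (G5) yields a subsequence $x_{i_j}\to x'\in\DD$ ($\DD$ being closed in $\CC$), whence $\zeta^*=\psi(x')\in\psi(\DD)$, contradicting the disjointness $\psi(\KK_\epsilon)\cap\psi(\DD)=\varnothing$ you already established. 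Second, since $\bar\zeta(t)\in\psi(\EE)$ by construction, you should define $N_\rho$ as a subset of the compact set $\psi(\EE)$ rather than of $\CC^\psi$; then $N_\rho$ is compact, and $\hat\KK=\psi^{-1}(N_\rho)$ is compact by (G5) (it equals the full preimage because $N_\rho$ misses $\psi(\DD)$). With that repair your proof is complete.
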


\begin{proof}
Without loss of generality, suppose that $\epsilon$ is sufficiently small so that $ \OO_\DD(\epsilon) \cap \OO_\GG(\epsilon) = \varnothing$.
(This is possible because both $\OO_\DD(\epsilon)$ and $\OO_\GG(\epsilon)$ are contained in the compact set $\EE$ for any $\epsilon > 0$, and two sets $\EE \cap \DD$ and $\EE \cap \GG$ are disjoint under Assumption \ref{Ass:standing ass}.)
Also, noting that $\EE\bs\OO_\DD(\epsilon)$ is compact because $\EE$ is compact and $\OO_\DD(\epsilon)$ is open relative to $\EE$, we see that Lemma \ref{Lem: local embedding of GF} yields $L_1$ such that $|x-x'|\leq L_1|\psi(x)-\psi(x')|$ for all $x$ and $x'$ in $\EE\bs\OO_\DD(\epsilon)$, and $L_2$ such that $|x-x'|\leq L_2|\psi(x)-\psi(x')|$ for all $x$ and $x'$ in $\EE\bs\OO_\GG(\epsilon)$.
Take $L:=\max(L_1,L_2)$.

Since $\lim_{t \to \infty} |\zeta(t)-\hat \zeta(t)| = 0$, there is $T \geq 0$ such that 
$$|\zeta(t) - \hat\zeta(t)| < \frac{\epsilon}{2L} \qquad \text{for all $t>T$}$$
where $T$ may depend on $\epsilon$ and the initial conditions of the plant and the observer.
Thus, we have that
\begin{align*} 
|\zeta(t)-\Pi_{\psi(\EE)}(\hat\zeta(t))| &\leq |\zeta(t)-\hat\zeta(t)| + |\hat\zeta(t)-\Pi_{\psi(\EE)}(\hat\zeta(t))| \\
&\leq |\zeta(t)-\hat\zeta(t)| + |\hat\zeta(t) - \zeta(t)| \\
&< \frac{\epsilon}{L}, \qquad \text{for all $t > T$}.
\end{align*}
Finally, it is seen that, for all $t\in\tau_\alpha(\epsilon)$, both $x(t)$ and $\hat x(t)$ belong to either $\EE\bs\OO_\DD(\epsilon)$ or $\EE\bs\OO_\GG(\epsilon)$, because $x(t)\in \EE\bs(\OO_\DD(\epsilon)\cup\OO_\GG(\epsilon))=(\EE\bs\OO_\DD(\epsilon))\cap(\EE\bs\OO_\GG(\epsilon))$ and $\hat x(t) \in \EE= (\EE\bs\OO_\DD(\epsilon))\cup (\EE\bs\OO_\GG(\epsilon))$ at the time $t$.
As a result, at $t \in \tau_\alpha(\epsilon)$ and $t > T$, we have that $|x(t)-\hat x(t)|\leq L_1|\zeta(t)-\Pi_{\psi(\EE)}(\hat\zeta(t))|<L_1 \epsilon / L \leq\epsilon$ if both $x(t)$ and $\hat x(t)$ belong to $\EE\bs\OO_\DD(\epsilon)$, or that $|x(t)-\hat x(t)|\leq L_2|\zeta(t)-\Pi_{\psi(\EE)}(\hat\zeta(t))|<L_2 \epsilon / L \leq\epsilon$ if both $x(t)$ and $\hat x(t)$ belong to $\EE\bs\OO_\GG(\epsilon)$.
\end{proof}

Theorem \ref{Theorem: State Estimation} can be reinterpreted in a graphical sense: 

\begin{Cor}\label{Cor:Graphcal sense}
Under the assumptions of Theorem \ref{Theorem: State Estimation}, for sufficiently small $\epsilon^*>0$, there exists $T^*>0$ such that
\begin{description}
	\item[(a)] for each $t>T^*$, there exists $s>0$ satisfying that $|(t,x(t))-(s,\hat x(s))|<\epsilon^*$, and;
	\item[(b)] for each $t>T^*$, there exists $s>0$ satisfying that $|(s,x(s))-(t,\hat x(t))|<\epsilon^*$. 
\end{description}
\end{Cor}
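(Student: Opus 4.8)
The plan is to deduce Corollary~\ref{Cor:Graphcal sense} directly from Theorem~\ref{Theorem: State Estimation}, using the dwell-time--type bound of Lemma~\ref{Lem: Existence of alpha and beta} to fill in the gaps in $\tau_\alpha(\epsilon)$. The key observation is that the estimate $|x(t)-\hat x(t)|<\epsilon$ holds on $\tau_\alpha(\epsilon)\cap(T,\infty)$, i.e.\ for all sufficiently large $t$ outside balls of radius $\alpha(\epsilon)$ around the jump instants $\tau_i$. For part~(a), given $t>T^*$, if $t\in\tau_\alpha(\epsilon)$ then taking $s=t$ gives $|(t,x(t))-(s,\hat x(s))| = |x(t)-\hat x(t)| < \epsilon$, which already yields the claim if $\epsilon^*$ is chosen accordingly. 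The work is in handling $t$ near a jump: there, I would pick a nearby time $s$ just outside the excluded ball (e.g.\ $s=\tau_i\pm\alpha(\epsilon)$, or the endpoint of the current interval if the jump is imminent) so that $\hat x(s)$ is $\epsilon$-close to $x(s)$, and then bound $|x(t)-x(s)|$ and $|t-s|$ by quantities controlled by $\alpha(\epsilon)$.

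Concretely, I would proceed as follows. Fix a small target $\epsilon^*>0$. Choose $\epsilon>0$ small enough that $2\alpha(\epsilon) < \epsilon^*/2$ and that the plant vector field $f$, which is continuous hence bounded on the compact invariant set $\EE$ by some constant $M$, satisfies $M\,\alpha(\epsilon) + \epsilon < \epsilon^*/2$; here I also use that $\alpha$ is class-$\KK$, so $\alpha(\epsilon)\to 0$ as $\epsilon\to 0$. Apply Theorem~\ref{Theorem: State Estimation} with this $\epsilon$ to obtain $T$; set $T^* := T + \alpha(\epsilon)$. Now take $t>T^*$. If $t\in\tau_\alpha(\epsilon)$, set $s=t$ and we are done since $|x(t)-\hat x(t)|<\epsilon<\epsilon^*$. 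Otherwise $t\in\BB_{\tau_i}(\alpha(\epsilon))$ for some jump index $i$; since $t>T^*>T+\alpha(\epsilon)$ we have $\tau_i > T$, and I can select $s$ to be a point in $\tau_\alpha(\epsilon)\cap(T,\infty)$ within distance $2\alpha(\epsilon)$ of $t$ — for instance $s=\tau_i+\alpha(\epsilon)$ if $t$ belongs to the flow interval after the $i$-th jump, or $s=\tau_i-\alpha(\epsilon)$ if $t<\tau_i$ (using that $x$ is right-continuous and defined for all $t\ge0$, and that, by (E2) and Lemma~\ref{Lem: Existence of alpha and beta}, consecutive jumps are separated so such an $s>T$ exists on the correct side). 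Then $|t-s|\le 2\alpha(\epsilon)$, and along the flow $|x(t)-x(s)|\le M|t-s|\le 2M\alpha(\epsilon)$, whence
\begin{align*}
|(t,x(t))-(s,\hat x(s))| &\le |t-s| + |x(t)-x(s)| + |x(s)-\hat x(s)| \\
&< 2\alpha(\epsilon) + 2M\alpha(\epsilon) + \epsilon < \epsilon^*
\end{align*}
by the choice of $\epsilon$. Part~(b) is symmetric: given $t>T^*$, if $t\in\tau_\alpha(\epsilon)$ take $s=t$; otherwise take $s=t$ again but note we need $x(s)$ close to $\hat x(t)$, so instead I would pick, for the given $t$, a time $\sigma\in\tau_\alpha(\epsilon)$ near $t$ as above, apply the $\epsilon$-bound at $\sigma$ to get $|x(\sigma)-\hat x(\sigma)|<\epsilon$, and then relate $\hat x(\sigma)$ to $\hat x(t)$ — but since the Corollary lets us choose $s$ freely on the plant side, the cleanest route is: set $s=\sigma$ and bound $|(s,x(s))-(t,\hat x(t))|\le|s-t| + |x(s)-\hat x(s)| + |\hat x(s)-\hat x(t)|$, controlling the last term by continuity of $\hat x$ (the observer state $\hat\zeta$ evolves continuously and $\psi^{-1}\circ\Pi_{\psi(\EE)}$ is continuous on the relevant region).

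The main obstacle I anticipate is the third term in part~(b), $|\hat x(s)-\hat x(t)|$: unlike $x$, the estimate $\hat x$ is built from the observer state through the possibly non-Lipschitz projection $\Pi_{\psi(\EE)}$ and the inverse gluing function, so a crude continuity argument may not give a modulus controlled uniformly by $\alpha(\epsilon)$. The way around this is to avoid estimating $\hat x$ motion directly: instead, observe that $\hat x(t)\in\EE$ always, and either $\hat x(t)$ lies in $\EE\bs\OO_\DD(\epsilon)$ or in $\EE\bs\OO_\GG(\epsilon)$; in the former case Lemma~\ref{Lem: local embedding of GF} gives a Lipschitz bound $|x'(t)-\hat x(t)|\le L_1|\psi(x'(t))-\hat\zeta'|$ for the {\em glued} variables, so it suffices to find $s$ with $\psi(x(s))$ close to $\Pi_{\psi(\EE)}(\hat\zeta(t))$, and this reduces to a statement purely about the continuous trajectory $\zeta(\cdot)$ and the converging observer $\hat\zeta(\cdot)$ on $\CC^\psi$ — exactly the setting where Theorem~\ref{Theorem: State Estimation}'s proof already operates. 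In other words, I would recast both (a) and (b) on the glued domain first, prove the $\epsilon^*$-closeness of space-time graphs of $\zeta$ and $\hat\zeta$ there (which is immediate from $|\zeta(t)-\hat\zeta(t)|\to0$ and uniform continuity of $\zeta$), and then pull back through the local Lipschitz inverse, handling the two regions $\EE\bs\OO_\DD(\epsilon)$ and $\EE\bs\OO_\GG(\epsilon)$ separately as in the proof of Theorem~\ref{Theorem: State Estimation}.
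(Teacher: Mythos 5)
Your proposal is correct and follows essentially the same route as the paper's proof: for (a) you shift to $s=\tau_i\pm\alpha(\epsilon)$ and control $|x(t)-x(s)|$ by the bound $M$ on $f$ over $\EE$, and for (b) you correctly identify that $|\hat x(s)-\hat x(t)|$ cannot be bounded directly and instead recast the argument on the glued domain, comparing $\zeta(s)$ to $\Pi_{\psi(\EE)}(\hat\zeta(t))$ and pulling back through Lemma~\ref{Lem: local embedding of GF} on $\EE\bs\OO_\DD(\epsilon^*)$ or $\EE\bs\OO_\GG(\epsilon^*)$ — exactly as the paper does. The only detail worth making explicit is that $x(s)$ must be shown to land in the same region as $\hat x(t)$ (e.g.\ $x(\tau_{i^*}+\alpha(\epsilon))\in\OO_\GG(\epsilon^*)$) so that the Lipschitz inverse applies to the pair, which your ``handle the two regions separately'' remark already covers.
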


\begin{proof}
	See the Appendix.
\end{proof}

\begin{Exam}\label{Ex.GluedObsforBB}
Consider a bouncing ball having a mass $m>0$ with the gravitational constant $\rho>0$. 
The height and velocity of the ball are $x_1$ and $x_2$, respectively. 
The output is the height $x_1$.
Suppose that the coefficient of restitution is $1$.
The system is described with $x = (x_1,x_2)$ by
\begin{alignat}{2}
\begin{bmatrix} \dot x_1 \\ \dot x_2 \end{bmatrix} &= f(x) = \begin{bmatrix} x_2 \\ -\rho \end{bmatrix} & \quad & \text{when $x\in \CC \subset \R^2$}, \notag \\
{x}^+ &= g(x)= -x & & \text{when $x\in \DD \subset \R^2$}, \label{exeq:BBwithOutput} \\
y &= h(x) = x_1, \notag
\end{alignat}
where $\CC:=\{( x_1, x_2)\in\R^2: (x_1\geq 0) \;\&\; (|x|\neq0)\}$ and $\DD:=\{(0, x_2)\in\CC : x_2<0\}$. 
Let $\EE:=\{x\in \CC: \underline\delta \leq m\rho x_1 + (1/2) m x_2^2 \leq \overline \delta\}$ where $\overline \delta>\underline\delta>0$. 
The system satisfies Assumptions~\ref{Ass:standing ass}--\ref{Ass:basic estimation} with $x(0) \in \EE$, and the geometry of the domain helps to find gluing functions.
One idea is to glue both $\DD$ and $\GG$ on the negative $x_1$ axis by doubling the angle of the vector from the positive $x_1$ axis. As a result, the glued domain $\CC^\psi$ becomes $\R^2\backslash\{0_2\}$.
The conference version \cite{JS14} of this paper illustrates this idea.
However, the glued system is not very simple.
Another idea for finding a gluing function is to embed the domain $\CC \subset \R^2$ into $\R^3$ by gluing $\DD$ and $\GG$ together by dragging both $\DD$ and $\GG$ onto the vertical axis (see Figure \ref{fig:figure}). 
One of the functions realizing this idea is $\psi:{\CC}\rightarrow \R^3$ defined as $x=(x_1,x_2) \mapsto \zeta = (\zeta_1,\zeta_2,\zeta_3) = (x_1^2, 2x_1x_2, 2x_2^2+4\rho x_1)$.
Note that
$\psi$ is a gluing function satisfying the vector field matching condition \eqref{eq: matching condition of estimation}. 
The inverse gluing function is
$$\psi^{-1}(\zeta)=\begin{bmatrix}
\sqrt{\zeta_1},& {\overline{\rm sgn}}(\zeta_2)\sqrt{\frac{\zeta_3-4\rho\sqrt{\zeta_1}}{2}}
\end{bmatrix}^\top \quad \hbox{for all } \zeta \in \CC ^\psi$$
where $\overline{{\rm sgn}}(\zeta_2)$ is $1$ if $\zeta_2 \geq 0$ and $-1$ if $\zeta_2 < 0$.
Through the gluing function, we obtain the functions $f^\psi(\zeta)=[\zeta_2, \zeta_3-6\rho\sqrt{\zeta_1}, 0]^\top$ and $h^\psi(\zeta)=\sqrt{\zeta_1}$, so that the glued system is written as a continuous-time dynamical system:
\begin{align}
\begin{split}\label{Example:GluedBB}
\dot \zeta&=\begin{bmatrix}
0 & 1 & 0\\0 & 0 & 1\\0 & 0 & 0\\
\end{bmatrix}\zeta +\begin{bmatrix}
0  \\
-6\rho \\
0\\
\end{bmatrix} \sqrt{\zeta_1}=:A\zeta+B\sqrt{\zeta_1},\\
y&=\sqrt{\zeta_1},
\end{split}
\end{align}  
for $\zeta \in \CC^\psi$. 
We propose an observer for \eqref{Example:GluedBB} as
\begin{align}\label{eq:ex Luenberger}
\dot {\hat \zeta}=A\hat\zeta+L(C\hat \zeta-y^2)+By,
\end{align}
where $C=\begin{bmatrix}
1 & 0 & 0
\end{bmatrix}$ and $A+LC$ is Hurwitz. 
Let $\zeta_e:=\hat \zeta - \zeta$. 
Then, the dynamics of $\zeta_e$ results in 
$$\dot \zeta_e=(A+LC)\zeta_e,$$
because $y^2=C\zeta$. 
Therefore, $\zeta_e$ converges to zero exponentially and, by Theorem \ref{Theorem: State Estimation}, we obtain an estimate 
\begin{align}\label{exeq:etimate}
\begin{split}
\hat x &= \psi^{-1}(\Pi_{\psi(\EE)}(\hat{\zeta})) = \psi^{-1}(\bar{\zeta_1},\bar{\zeta_2},\bar{\zeta_3}) \\
&= \left (\sqrt{\bar\zeta_1},~ {\overline {\rm sgn}}(\bar \zeta_2)\sqrt{\frac{\bar\zeta_3-4\rho\sqrt{\bar \zeta_1}}{2}}\right )
\end{split}
\end{align}
where $\Pi_{\psi(\EE)}(\hat{\zeta})$ finds a vector $\bar{\zeta} \in \psi(\EE)$ that has the minimum distance from $\hat{\zeta}$.
Figure \ref{Figure: Estimation error:BB} shows a simulation result.
\hfill $\End$
\end{Exam}

\begin{figure}
	\centering
	\includegraphics[width=.8\columnwidth]{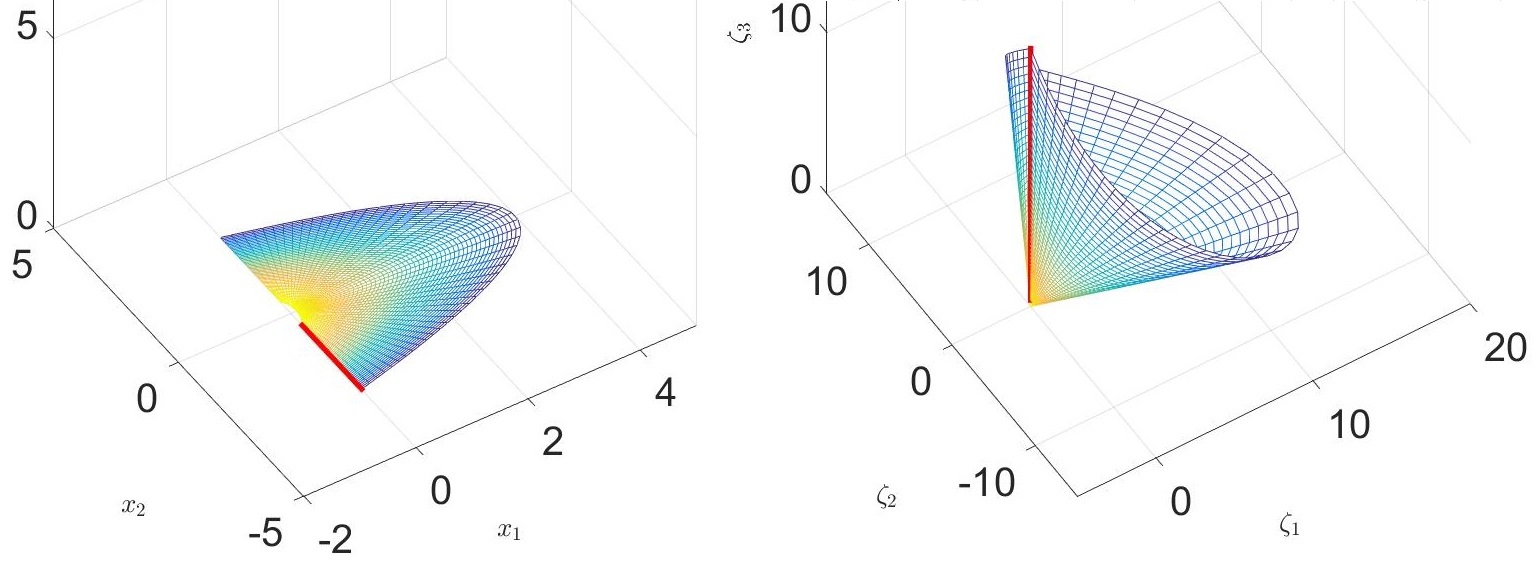}	
	\caption{Visualization of the gluing in Example \ref{Ex.GluedObsforBB}; the set $\DD$ is in red.} 
	\label{fig:figure}
\end{figure}

\begin{figure}[t]
	\begin{minipage}{.5\linewidth}
		\centering
		\includegraphics[width=1\columnwidth]{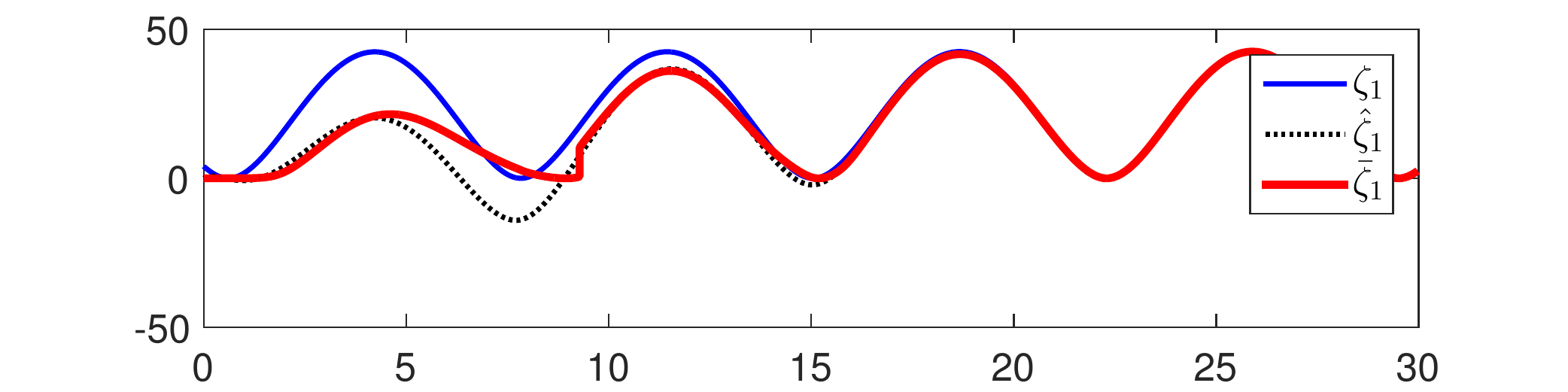}
		\includegraphics[width=1\columnwidth]{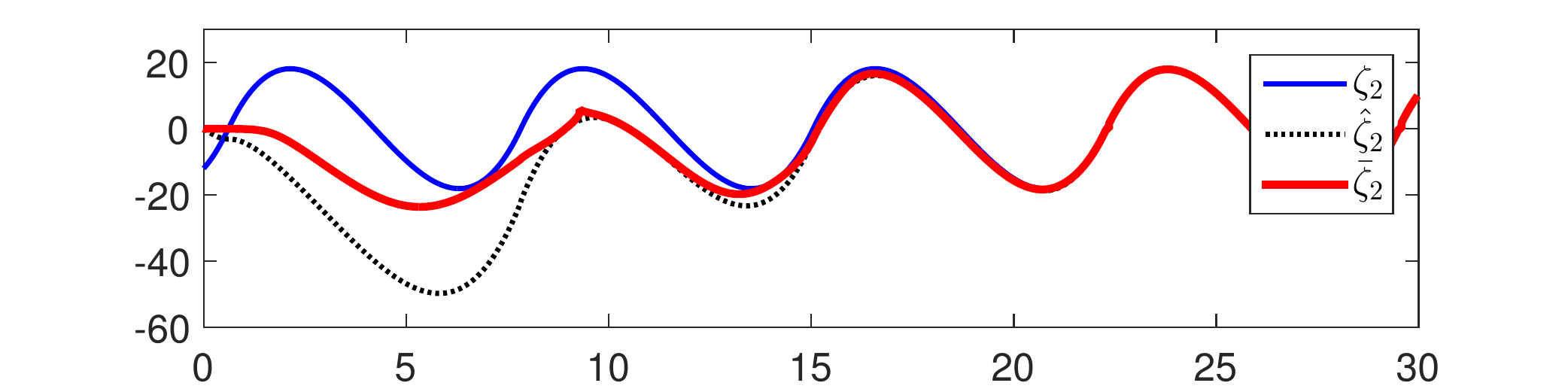}
		\includegraphics[width=1\columnwidth]{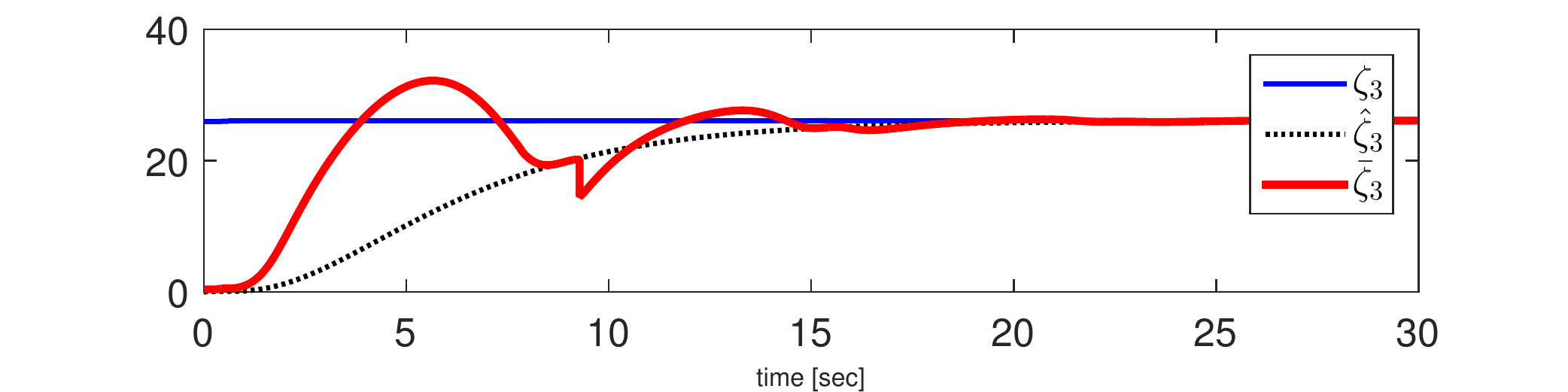}
	\end{minipage}~
	\begin{minipage}{.5\linewidth}
		\centering
		\includegraphics[width=1\columnwidth]{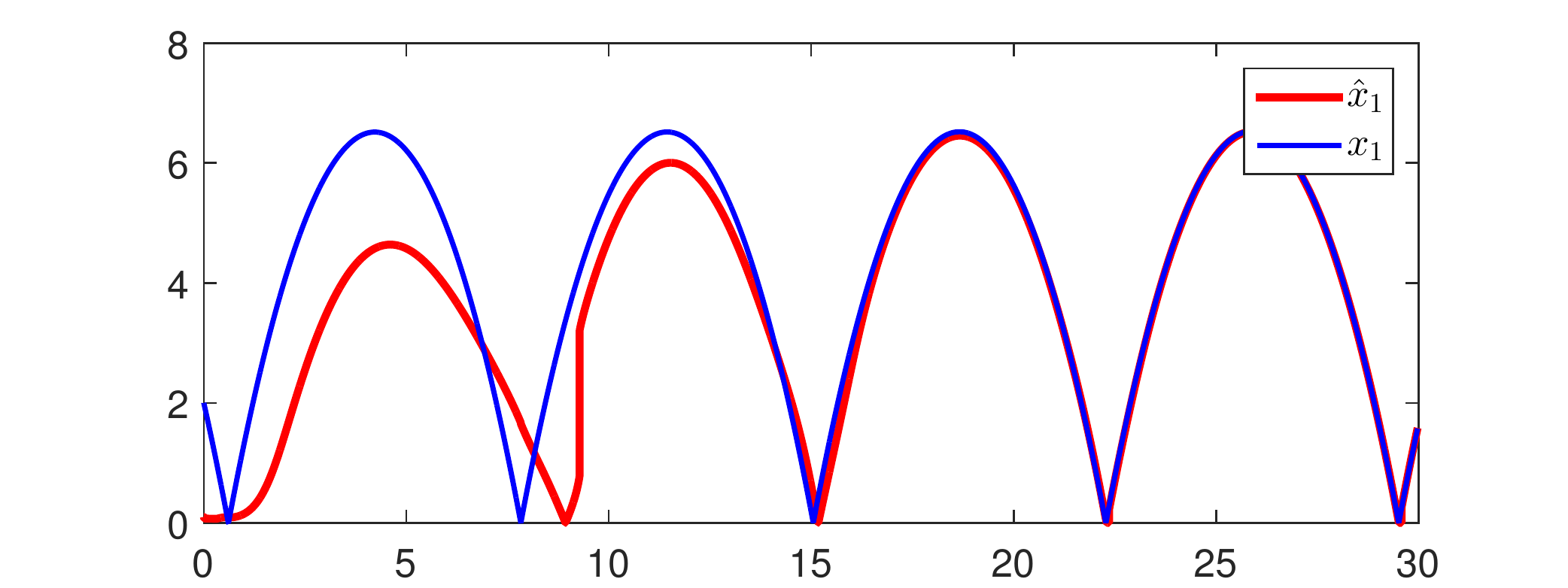}
		\includegraphics[width=1\columnwidth]{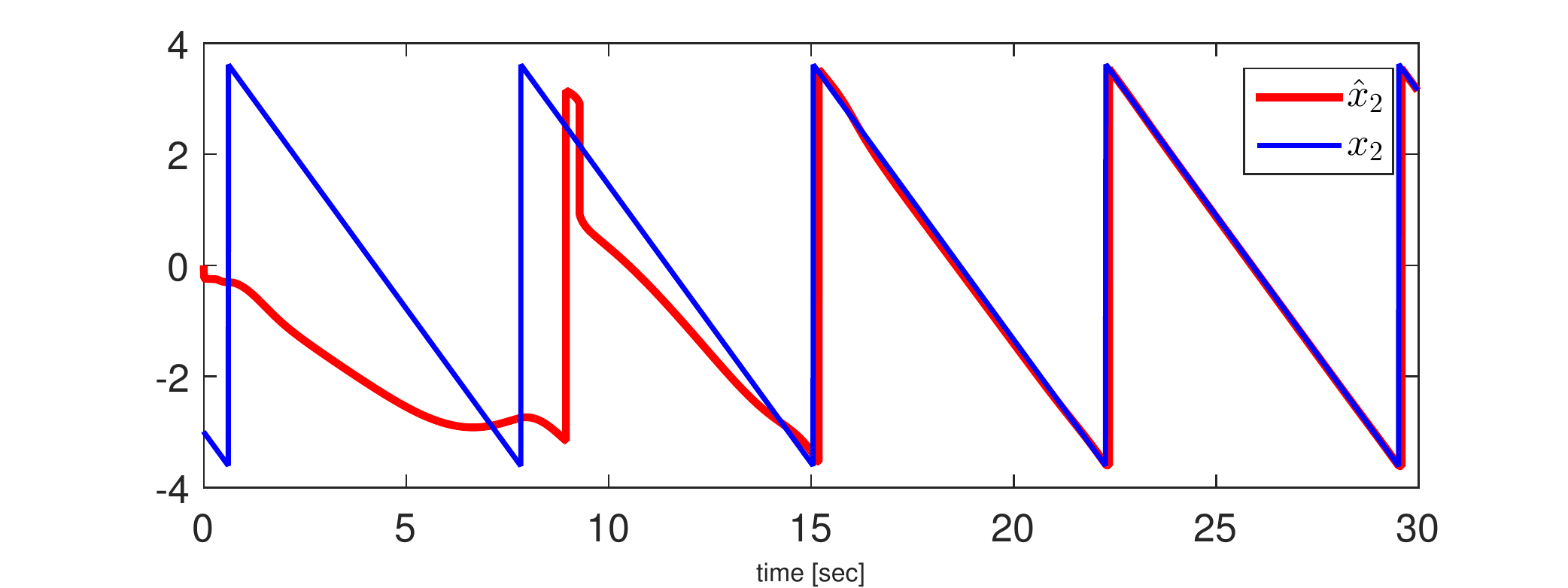}
	\end{minipage}
	\caption{A simulation result of Example \ref{Ex.GluedObsforBB} when $\rho=1$ and $x_0=(2,-3)$. The left three figures illustrate the state $(\zeta_1,\zeta_2,\zeta_3)$ of \eqref{Example:GluedBB} in blue, the estimate $(\hat\zeta_1,\hat\zeta_2,\hat\zeta_3)$ obtained from \eqref{eq:ex Luenberger} in black, and the projected estimate $(\bar\zeta_1,\bar\zeta_2,\bar\zeta_3)$ onto $\psi(\EE)$ in red. The right two figures depict the state $(x_1,x_2)$ of \eqref{exeq:BBwithOutput} in blue and the estimate $(\hat x_1,\hat x_2)$ obtained from \eqref{exeq:etimate} in red.} 
	\label{Figure: Estimation error:BB}
\end{figure}

\subsection{Observer Designs}

Since the glued system is a continuous-time system, one may employ any conventional observer design technique for continuous-time systems. However, many of them require additional properties for $f^\psi$ and $h^\psi$, e.g., linearity, Lipschitz continuity, triangular structure, and so on.
Since some of those conditions may depend on the selection of the gluing function $\psi$, we discuss the conditions that lead to such properties and propose two observer designs.

\smallskip

\subsubsection{Observer with Linearized Error Dynamics}\label{subsec:particular}
We first propose a condition for the existence of a gluing function that makes the glued system become a linear system with output injection; a form used for observer designs in, e.g., \cite{Kel87,JB04}. 
To simplify the presentation, we only consider the case when the output is scalar, and both $f$ and $h$ are smooth.
Then, the following assumption and theorem yield a gluing function and an observer design.

\begin{Assn}\label{Ass:immersion}
There are an integer $m \ge k$ and a smooth injective function $\phi: h(\CC) \to \R$ such that, with $h^*(x) := \phi(h(x))$, \\ 
\noindent{\bf (I1)} $L^i_f h^*(x)=L^i_f h^*(x_g)|_{x_g=g(x)}$ for $1\leq i\leq m$ and for all $x\in \DD$;\\
\noindent{\bf (I2)} there exist smooth solutions $a_1(h^*),\dots,a_m(h^*)$ to the differential equation
$$L_f^m h^*=a_m(h^*)+L_f a_{m-1}(h^*)+\cdots+L^{m-1}_f a_1(h^*).$$
\hfill $\End$
\end{Assn}

The role of $\phi$ is to re-define the output function $h$, by which the condition is slightly extended from the case when $\phi$ is an identity so that $h^*(x) = h(x)$. 

\begin{Thm}\label{Thorem: immersion observer}
Suppose that Assumptions~\ref{Ass:standing ass}--\ref{Ass:immersion} hold. 
If 
$$\psi(x) := \begin{bmatrix} h^*(x) \\ L_fh^*(x)-a_1(y^*)|_{y^*=h^*(x)} \\ \vdots \\ L^{m-1}_fh^*(x)-\sum^{m-1}_{i=1} L_f^{m-1-i} a_{i}(y^*)|_{y^*=h^*(x)} \end{bmatrix}$$
satisfies (G2), (G4), and (G5), then $\psi$ is a gluing function satisfying \eqref{eq: matching condition of estimation}. 
Furthermore, the glued system is given by
\begin{align}\label{eq:gsys_errlin}
\begin{split}
\dot \zeta_1 &= \zeta_2+a_1(\zeta_1)\\
&\vdots \\
\dot \zeta_{m-1} &= \zeta_m+a_{m-1}(\zeta_1)\\
\dot \zeta_m &= a_m(\zeta_1)\\	
y^*&=\phi(y)=\zeta_1 .
\end{split}
\end{align}
\end{Thm}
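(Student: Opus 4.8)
The plan is to verify the five gluing-function properties (G1)–(G5) for the candidate $\psi$ in the theorem statement, then compute $f^\psi$ explicitly to obtain \eqref{eq:gsys_errlin}, and finally check the vector field matching condition \eqref{eq: matching condition of estimation}. Three of the five conditions — (G2), (G4), (G5) — are assumed outright in the hypothesis, so the real work is (G1), (G3), and the computation of the glued dynamics; matching \eqref{eq: matching condition of estimation} will then come essentially for free.

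First I would observe that (G3) is immediate: $h$ is smooth by assumption, $\phi$ is smooth, so $h^*=\phi\circ h$ is smooth; each entry of $\psi$ is a finite sum of Lie derivatives $L_f^j h^*$ (which are smooth because $f$ and $h^*$ are smooth) composed with the smooth functions $a_i$, hence $\psi$ is smooth, in particular $C^1$. For (G1), I would unwind the definition: the $(\ell+1)$-th component of $\psi(x)$ is $L_f^{\ell}h^*(x)-\sum_{i=1}^{\ell}L_f^{\ell-i}a_i(y^*)\big|_{y^*=h^*(x)}$ for $0\le \ell\le m-1$. Condition (I1) gives $L_f^i h^*(x)=L_f^i h^*(g(x))$ for $1\le i\le m$ and all $x\in\DD$, and for $i=0$ the output matching condition (E3) together with $h^*=\phi\circ h$ gives $h^*(x)=h^*(g(x))$. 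Since every component of $\psi$ is built from $L_f^\ell h^*$ ($0\le\ell\le m-1$) and from $a_i$ evaluated at $h^*(x)$, and all of these agree at $x$ and $g(x)$ for $x\in\DD$, we get $\psi(x)=\psi(g(x))$ for all $x\in\DD$, which is (G1). Combining (G1) with the hypotheses (G2), (G4), (G5), we conclude $\psi$ is a gluing function.

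Next I would compute the glued vector field. Writing $\zeta=\psi(x)$ with components $\zeta_1=h^*(x)$ and $\zeta_{\ell+1}=L_f^{\ell}h^*(x)-\sum_{i=1}^{\ell}L_f^{\ell-i}a_i(h^*(x))$ for $1\le\ell\le m-1$, I differentiate along the flow $\dot x=f(x)$, so $\dot\zeta_{\ell+1}=L_f\big(\zeta_{\ell+1}\text{ as a function of }x\big)=L_f^{\ell+1}h^*(x)-\sum_{i=1}^{\ell}L_f^{\ell+1-i}a_i(h^*(x))$. For $0\le\ell\le m-2$ this equals $\big(L_f^{\ell+1}h^*(x)-\sum_{i=1}^{\ell+1}L_f^{\ell+1-i}a_i(h^*(x))\big)+a_{\ell+1}(h^*(x))=\zeta_{\ell+2}+a_{\ell+1}(\zeta_1)$, giving the first $m-1$ equations of \eqref{eq:gsys_errlin}. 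For $\ell=m-1$ the same manipulation gives $\dot\zeta_m=L_f^m h^*(x)-\sum_{i=1}^{m-1}L_f^{m-i}a_i(h^*(x))$, and now (I2) — the identity $L_f^m h^*=a_m(h^*)+L_f a_{m-1}(h^*)+\cdots+L_f^{m-1}a_1(h^*)$ — lets me replace $L_f^m h^*(x)$ so that all terms cancel except $a_m(h^*(x))=a_m(\zeta_1)$, yielding $\dot\zeta_m=a_m(\zeta_1)$. Since $y^*=\phi(y)=h^*(x)=\zeta_1$, this is exactly \eqref{eq:gsys_errlin}, and it shows $f^\psi$ is well-defined as a function of $\zeta$ alone. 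Finally, the matching condition \eqref{eq: matching condition of estimation} asks that $\dd\psi(x)f(x)=\dd\psi(g(x))f(g(x))$ for $x\in\DD$; but $\dd\psi(x)f(x)=\frac{d}{dt}\psi(x(t))|_{t=0}$ expressed through the computation above equals $f^\psi(\psi(x))$, a function of $\psi(x)$ only, and by (G1) $\psi(x)=\psi(g(x))$, so both sides equal $f^\psi(\psi(x))$ and coincide. (Alternatively one invokes the last sentence before Theorem~\ref{Thorem: immersion observer} and Remark-style reasoning, but the direct argument is cleaner.)

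The main obstacle is bookkeeping rather than conceptual: one must be careful with the index ranges in the telescoping sum so that the $a_i$ terms cancel correctly at each level, and one must make sure the evaluation $|_{y^*=h^*(x)}$ is treated consistently when taking $L_f$ (i.e. that $L_f$ of $a_i(h^*(x))$ is $a_i'(h^*(x))\,L_f h^*(x)$, which is again a function of $x$ through $h^*$, but this is absorbed into the definition of $L_f a_i(h^*)$ as used in (I2)). Once the indices are aligned, (I2) does precisely the job it was designed for, and there is no remaining difficulty.
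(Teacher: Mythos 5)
Your proposal is correct and follows essentially the same route as the paper's (much terser) proof: (G3) from smoothness, (G1) from (E3) and (I1) applied componentwise, the glued dynamics by differentiating $\psi$ along the flow with (I2) closing the last equation, and the matching condition \eqref{eq: matching condition of estimation} falling out because $\dd\psi(x)f(x)$ is thereby expressed as a function of $\psi(x)$ alone, which by (G1) takes the same value at $x$ and $g(x)$. The only substantive detail worth keeping explicit is the one you flag yourself: that $L_f^{j}a_i(h^*)$ depends only on $h^*,L_fh^*,\dots,L_f^{j}h^*$, so (E3) and (I1) suffice to make every term in every component agree at $x$ and $g(x)$ for $x\in\DD$.
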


\begin{proof}
Since $a_1,\dots,a_{m-1}$, $f$, and $h^*$ are smooth, (G3) holds. 
In addition, by (E3) and (I1), it can be seen that (G1) follows and $\psi$ satisfies \eqref{eq: matching condition of estimation}. 
By the construction, the glued system \eqref{eq:gsys_errlin} is obtained.
In particular, the equation $\dot \zeta_m=a_m(\zeta_1)$ follows from (I2).
\end{proof}

The glued system in Theorem~\ref{Thorem: immersion observer} is a linear system up to output injection. 
Hence, we can design an observer for $\HH\pp^{h}$ as
$$\dot {\hat \zeta} = A \hat \zeta+L(C\hat \zeta-y^*) +a(y^*)$$
where
$$A := \begin{bmatrix}
	0\! & 1 & \cdots\! &  \!0\\
	\vdots &  \cdots\!& \ddots &  \!0\\
	0\! & \cdots\!& 0 &  \!1\\
	0\!& \cdots\!& 0 &  \!0\\
\end{bmatrix}, \; 
C^\top := \begin{bmatrix}	1\\ 0 \\ \vdots \\ 0
\end{bmatrix}, \;
a(\cdot) := \begin{bmatrix}
	a_1(\cdot)\\
	a_2(\cdot)\\
	\vdots\\
	a_m(\cdot)\\
\end{bmatrix}$$
and $A+LC$ is Hurwitz. 
Then, the error convergence is easily shown since the error dynamics is linear.

\smallskip

\subsubsection{Observer for Lipschitz Continuous Systems \cite{GK03}}
If the glued system \eqref{System: on Glued Domain} is observable in the sense of \cite{GK03} and $f^\psi$ and $h^\psi$ are Lipschitz continuous, then one can employ the observer design proposed in \cite{GK03}.
While the observability can be checked even in the original domain (for details, see \cite[Section IV]{JS14}, omitted due to the page limit), verification of Lipschitz continuity of $f^\psi$ and $h^\psi$ is often not easy because it is tedious to obtain analytic forms of $f^\psi$ and $h^\psi$ with respect to $\zeta$ by \eqref{eq: flow after gluing} and \eqref{eq:output after gluing}.
Instead, we propose a condition that guarantees Lipschitz continuity of the glued system, which can be checked without explicitly obtaining the glued system.
Consequently, the numerical construction of the observer, presented in \cite{GK03}, can be performed not by $f^\psi$ and $h^\psi$ but by $f$ and $h$.

\begin{Thm} \label{Theorem: Lip conti}
Under Assumptions \ref{Ass:standing ass}--\ref{Ass:basic estimation}, suppose that there is a $C^2$ gluing function $\psi:\CC\to\R^k$ satisfying \eqref{eq: matching condition of estimation}, and that $h$ is $C^1$.
Then, the functions $f^\psi$ and $h^\psi$ defined in \eqref{eq: flow after gluing} and \eqref{eq:output after gluing} are Lipschitz continuous on $\psi(\EE)$.
\end{Thm}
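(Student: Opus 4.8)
The plan is to reduce the claim to a standard fact: a $C^1$ function with locally bounded derivative on a convex (or uniformly geodesically connected) set is Lipschitz, and a continuous function on a compact set that is locally Lipschitz is globally Lipschitz. The subtlety is that $\psi^{-1}$, which appears in both $f^\psi = \dd\psi(\psi^{-1}(\cdot))\,f(\psi^{-1}(\cdot))$ and $h^\psi = h(\psi^{-1}(\cdot))$, is only defined piecewise and is not globally a nice map near $\psi(\DD) = \psi(\GG)$. So the first step is to cover $\psi(\EE)$ by finitely many "good" pieces. Concretely, fix a small $\epsilon > 0$ with $\OO_\DD(\epsilon) \cap \OO_\GG(\epsilon) = \varnothing$, and write
\begin{align*}
\psi(\EE) = \psi(\EE \bs \OO_\DD(\epsilon)) \cup \psi(\EE \bs \OO_\GG(\epsilon)),
\end{align*}
both of which are images of compact sets, hence compact. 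On each of these two pieces I will establish the Lipschitz bound for $f^\psi$ and $h^\psi$ separately; since a function that is $L_1$-Lipschitz on a compact set $S_1$ and $L_2$-Lipschitz on a compact set $S_2$ with $S_1 \cup S_2$ compact and connected (and with $S_1 \cap S_2 \ne \varnothing$, which holds here because the interior of $\EE$ away from the boundary is common to both) is globally Lipschitz on $S_1 \cup S_2$, this suffices. (If connectivity is a concern one argues on each connected component, or inserts the elementary two-piece gluing lemma with an explicit constant depending on the "diameter over gap" of the two pieces.)

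Next, on the piece $\MM := \EE \bs \OO_\DD(\epsilon)$ (symmetrically for the other piece) I invoke Lemma~\ref{Lem: local embedding of GF}: since $\MM$ is compact with $\MM \cap \DD = \varnothing$, the map $\psi$ is injective on $\MM$ and there is $L > 0$ with $|x - x'| \le L\,|\psi(x) - \psi(x')|$ for all $x, x' \in \MM$. This says precisely that the inverse gluing function $\psi^{-1}$, restricted to $\psi(\MM)$, is Lipschitz with constant $L$. Now $h^\psi|_{\psi(\MM)} = h \circ (\psi^{-1}|_{\psi(\MM)})$ is a composition of a $C^1$ (hence locally Lipschitz, hence Lipschitz on the compact set $\MM$) function $h$ with a Lipschitz function, so $h^\psi$ is Lipschitz on $\psi(\MM)$. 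For $f^\psi|_{\psi(\MM)} = (\dd\psi \cdot f) \circ (\psi^{-1}|_{\psi(\MM)})$, the outer map $x \mapsto \dd\psi(x) f(x)$ is $C^1$ on $\CC$: $f$ is smooth by (A2), and $\dd\psi$ is $C^1$ because $\psi$ is assumed $C^2$. Hence $x \mapsto \dd\psi(x) f(x)$ is Lipschitz on the compact set $\MM$, and again composing with the Lipschitz map $\psi^{-1}|_{\psi(\MM)}$ gives that $f^\psi$ is Lipschitz on $\psi(\MM)$.

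The one point requiring care — and the step I expect to be the main obstacle — is the patching near $\psi(\DD)$: a priori $\psi^{-1}$ is defined via $\psi|_{\CC\bs\DD}^{-1}$, so on $\psi(\MM)$ with $\MM \cap \GG$ possibly nonempty one must make sure that "$\psi^{-1}$" as used in \eqref{eq: flow after gluing}--\eqref{eq:output after gluing} agrees with the branch $\psi|_{\MM}^{-1}$ supplied by Lemma~\ref{Lem: local embedding of GF}, and that the two branches (one on $\EE \bs \OO_\DD(\epsilon)$, the other on $\EE \bs \OO_\GG(\epsilon)$) produce the \emph{same} value of $f^\psi$ and $h^\psi$ on the overlap of their images. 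For $h^\psi$ this is exactly the output matching condition (E3), and for $f^\psi$ it is exactly the vector field matching condition \eqref{eq: matching condition of estimation}: if $\zeta \in \psi(\DD) = \psi(\GG)$ then $\zeta = \psi(x) = \psi(g(x))$ with $x \in \DD$, $g(x) \in \GG$, and \eqref{eq: matching condition of estimation} guarantees $\dd\psi(x)f(x) = \dd\psi(g(x))f(g(x))$, so the two branches agree and $f^\psi$, $h^\psi$ are genuinely well-defined single-valued Lipschitz functions on all of $\psi(\EE)$. Assembling these observations — compact decomposition, Lemma~\ref{Lem: local embedding of GF} for the Lipschitzness of each inverse branch, $C^1$-ness of $\dd\psi\cdot f$ and $h$ for the outer maps, the matching conditions for consistency on the overlap, and the two-piece gluing of Lipschitz constants — yields the theorem.
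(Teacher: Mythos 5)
Your decomposition of $\psi(\EE)$ into $S_1:=\psi(\EE\bs\OO_\DD(\epsilon))$ and $S_2:=\psi(\EE\bs\OO_\GG(\epsilon))$ and the per-piece estimates are fine; indeed, invoking Lemma~\ref{Lem: local embedding of GF} to get Lipschitzness of each inverse branch is a cleaner route than the paper's pointwise inverse-function-theorem argument, and your use of (E3) and \eqref{eq: matching condition of estimation} for well-definedness on the overlap is correct. The genuine gap is the final gluing step. The lemma you appeal to --- Lipschitz on compact $S_1$ and on compact $S_2$, with $S_1\cup S_2$ compact and connected and $S_1\cap S_2\neq\varnothing$, implies Lipschitz on the union --- is false in general (take two arcs tangent at a single common point with $F$ differing linearly in arclength along them), and your fallback constant of ``diameter over gap'' is unavailable here precisely because the gap is zero: $S_1\bs S_2$ consists of points whose unique preimage lies in $\OO_\GG(\epsilon)\bs\GG$ and $S_2\bs S_1$ of points whose preimage lies in $\OO_\DD(\epsilon)\bs\DD$, and both of these sets accumulate on the seam $\psi(\DD)=\psi(\GG)$, so one can find $\zeta_1\in S_1\bs S_2$ and $\zeta_2\in S_2\bs S_1$ arbitrarily close to each other with neither point covered by both pieces. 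This cross-seam pair is exactly the hard case, and your argument never addresses it.

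The missing idea, which is the heart of the paper's proof, is quantitative use of the matching conditions across the seam: for $\zeta_1,\zeta_2$ on opposite sides of $\psi(\DD)$ and sufficiently close, the segment $l(\zeta_1,\zeta_2)$ meets $\psi(\DD)$ at some $\zeta^*$ (this uses that $\psi(\DD)$ locally separates a small neighborhood in $\CC^\psi\subset\R^k$ into two regions), and \eqref{eq: matching condition of estimation} guarantees that the value $\dd\psi(x^*)f(x^*)=\dd\psi(g(x^*))f(g(x^*))$ at $\zeta^*$ is the same whichever branch computes it; one then chains the two one-sided estimates through $\zeta^*$ to get $|f^\psi(\zeta_1)-f^\psi(\zeta_2)|\leq L\left(|\zeta_1-\zeta^*|+|\zeta^*-\zeta_2|\right)=L|\zeta_1-\zeta_2|$, and similarly for $h^\psi$ via (E3). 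With that intermediate-point argument inserted, your proof closes; without it, the Lipschitz constant is not controlled near $\psi(\DD)$.
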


\begin{proof}
	See the Appendix.
\end{proof}

We illustrate the previous discussion through the hybrid ripple model studied by \cite{Za14} in the following example. The observer designed in \cite{Za14} requires the information of jumping times or an algorithm to estimate the jumping times.
On the other hand, the proposed observer does not need such requirements.

\begin{Exam}\label{ExSecond}
Consider a hybrid system given by
\begin{align}\label{Example:linear hybrid}
\begin{split}
\dot x &= \begin{bmatrix} 0 & 1 \\ -1 & 0 \end{bmatrix}x =: Ax \quad \text {when $x\in\CC$} \\
x^{+} &= \begin{bmatrix} 1 & 0 \\ 0 & -1 \end{bmatrix}x =: Jx \quad \text {when $x\in\DD$} \\
y &= \begin{bmatrix} 1 & 0 \end{bmatrix}x =: Hx
\end{split}
\end{align}
where $\CC=\{x\in \R^2 : (|x|>0) \;\&\; ( h_1 x\geq 0 ) \;\&\; ( h_2 x\geq 0) \}$, $\DD=\{x\in \CC:h_1 x= 0 \}$, $h_1=[\sqrt 3,1]$, and $h_2=[\sqrt 3,-1]$. 
The output matching condition (E3) holds because $HJx=x_1=Hx$. 
When we take $\EE=\{x\in \CC : 1\leq|x|\leq 3 \}$, the other conditions in Assumptions~\ref{Ass:standing ass}--\ref{Ass:basic estimation} are easily verified. 

From the intuition, we design the gluing function $\psi(x)$ of \eqref{Example:linear hybrid} such that it triples the angle of $x$ in the polar coordinates.
Indeed, we define $\psi:\CC\rightarrow\R^2$ as $(x_1,x_2) = (\rho \cos \theta, \rho \sin \theta) \mapsto (\zeta_1,\zeta_2) = (\rho \cos 3\theta, \rho \sin 3\theta)$, which has the realization:
\begin{align*}
\psi( x) &= \bigg[\frac{4x^3_1}{| x|^2}-3x_1,~-\frac{ 4x_2^3}{| x|^2}+3x_2\bigg]^\top.
\end{align*}
It is easy to check that $\psi$ is a gluing function satisfying \eqref{eq: matching condition of estimation} and the assumptions of Theorem \ref{Theorem: Lip conti}.
Then, by the theorem, the glued system of \eqref{Example:linear hybrid} by $\psi$ is Lipschitz continuous on $\EE$. 
In fact, an analytic form of the glued system can also be obtained in this case as, for all $\zeta \in \CC^\psi = \R^2 \bs \{0_2\}$, 
\begin{align*} 	
\dot \zeta &=f^\psi(\zeta)=\begin{bmatrix}
0&3 \\ -3&0
\end{bmatrix}\zeta ,\\
y&=h^\psi(\zeta)=\hbox{Re}\left(\sqrt[3]{\frac{\zeta_1+i|\zeta_2|}{|\zeta|}}\right)|\zeta|
\end{align*}
where $i$ is the imaginary unit, and it is verified that this system is Lipschitz continuous on $\EE$. 
Moreover we construct an observer for the glued system by using the output trajectory of \eqref{Example:linear hybrid} according to the numerical design procedure provided in \cite{GK03}. 
The details can be found in \cite{GK03,AL99}. 
A simulation result is illustrated in Figure \ref{Figure: Estimation error:linear}.
\hfill $\End$
\end{Exam}

\begin{figure}[t]
\centering
\includegraphics[width=1\columnwidth]{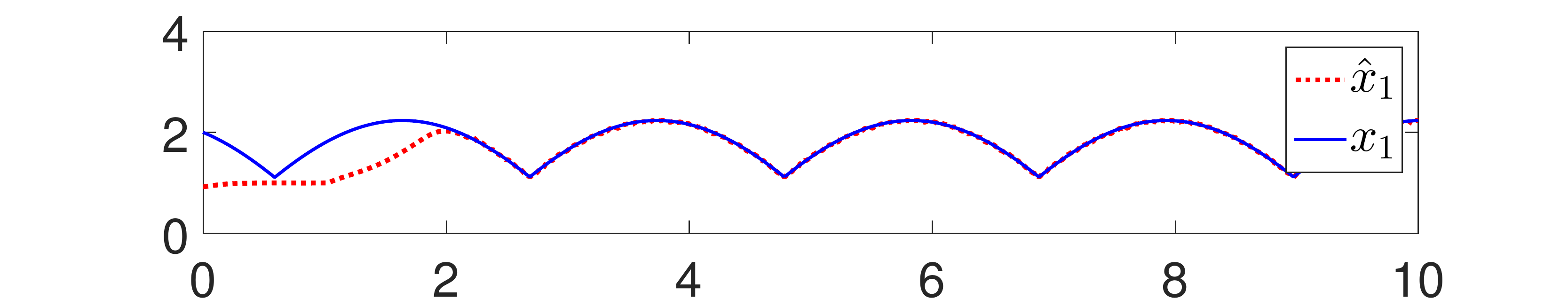}
\includegraphics[width=1\columnwidth]{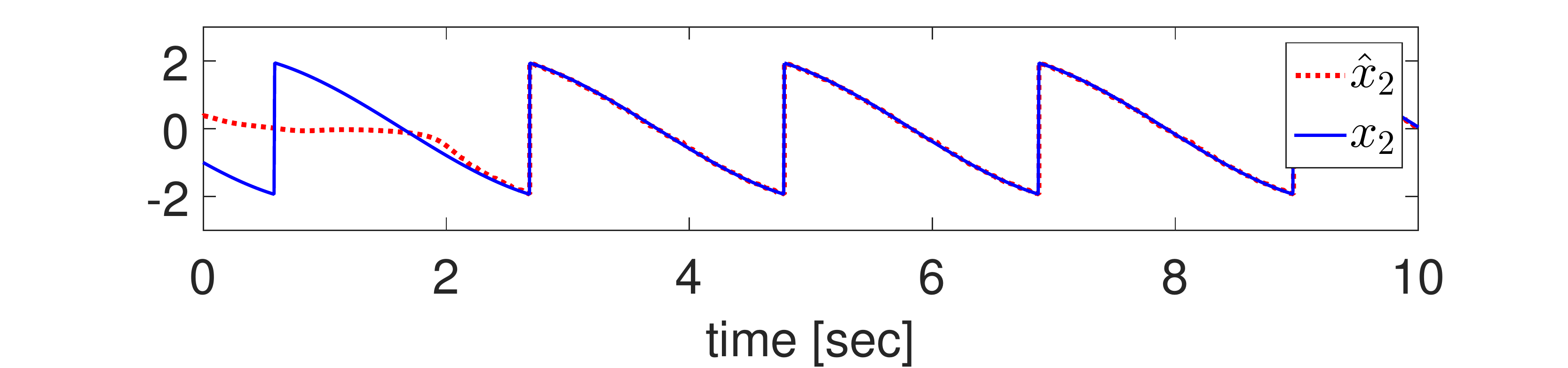}
\caption{Simulation result of Example \ref{ExSecond}: the state $(x_1,x_2)$ is in blue and its estimate $(\hat x_1,\hat x_2)$ is in red.} 
\label{Figure: Estimation error:linear}
\end{figure}

\section{Tracking Control by Gluing}\label{Section IV}

In this section, we consider asymptotic state tracking problem. 
The goal is to obtain a controller that steers the state trajectory to track a given state reference. 

\subsection{Standing Assumptions}

To consider the tracking control problem, we impose the following assumptions on the plant $\HH\pp=(\CC,f,\DD,g)$ and the reference $r(t)$.

\begin{Assn}\label{Ass: basic tracking}
\noindent{\bf (C1)} $r(t)$ is a state trajectory of an execution of $\HH\pp$ which is infinite in $t$-direction starting at $r(0) \in \CC$ under a piecewise continuous input $u=u\rr(t) \in \R^p$;\\
\noindent{\bf (C2)} there is a compact subset $\RR$ of $\CC$ such that $r(t)\in \RR$ for all $t\geq0$ and, for some $\mu>0$, it holds that 
$$\left\{ \begin{array}{ll} \nabla r_\DD (x)\cdot f(x,u)>\mu& \hbox{for all } (x,u)\in(\DD\cap \RR)\times \R^p, \\
\nabla r_\GG (x)\cdot f(x,u)>\mu& \hbox{for all } (x,u)\in(\GG\cap \RR)\times \R^p. \end{array}\right.$$
\hfill $\End$
\end{Assn}

The condition (C1) is necessary because, when $x_0 = r(0)$, one expects $x(t) = r(t)$ for all future time with some input.
On the other hand, (C2) asserts that, for any inputs, the state $x(t)$ only jumps on $\DD$ and then flows into the interior of $\CC$ on $\GG$.
Now, the objective is to construct a feedback control law $u=u\cc(u\rr,r,x)$ that makes $x(t)$ track $r(t)$.

\subsection{Tracking Control}

Suppose that there exists a gluing function $\psi$ for $\HH\pp$.
As shown in the previous section, if the matching condition 
\begin{equation}\label{eq: matching condition for tracking}
\dd\psi(x) f(x,u) = \dd\psi(g(x)) f(g(x),u), \; \forall (x,u)\in \DD\times \R^p
\end{equation}
holds, we can take $\dd\psi(x) f(x,u)$ as the vector field at $\psi(x) \in \CC^\psi$.
Therefore, we obtain the glued system as
\begin{align}\label{System: Glued system with input}
\dot \zeta =\dd\psi (\psi^{-1}(\zeta)) f(\psi^{-1}(\zeta),u)=: f^\psi(\zeta,u) .
\end{align}
Note that $f^\psi$ is continuous by (A2), (G3), \eqref{eq: matching condition for tracking}, and Lemma~\ref{Lem: local embedding of GF}. 

Next, we define a {\em glued reference} by $\psi$ as 
\begin{align}\label{eq: glued reference trajectory}
\zeta\rr(t):=\psi(r(t))\quad \hbox{for all $t\geq 0$}.
\end{align}
Then, $\zeta\rr(t)$ is continuous and it is a solution to \eqref{System: Glued system with input} when $u=u\rr(t)$. 
For the continuous-time system \eqref{System: Glued system with input} and the continuous reference \eqref{eq: glued reference trajectory}, one may design a feedback control of the form
\begin{equation}\label{System: Controller for Glued system}
u = u\cc^\psi(u\rr,\zeta\rr,\zeta).
\end{equation}
In this case, the closed-loop system becomes
\begin{equation*}
\dot \zeta = f^\psi(\zeta,u_{\rm c}^\psi(u\rr(t),\zeta\rr(t),\zeta)).
\end{equation*}

\begin{Assn}\label{Ass:controller}
There are a gluing function $\psi$ of $\HH\pp$ satisfying \eqref{eq: matching condition for tracking} and a local tracking controller \eqref{System: Controller for Glued system} such that \\
\noindent{\bf(C3)} $u^\psi_{\C}: \R^p \times \CC^\psi \times \CC^\psi \to \R^p$ is locally Lipschitz;\\
\noindent{\bf(C4)} for each $\gamma>0$, there exists an open neighborhood $\VV^\psi$ of $\zeta_{\rm r}(0)$ in $\CC^\psi$ such that, if $\zeta(0)\in \VV^\psi$, then
$$|\zeta(t)-\zeta_{\rm r}(t)|<\gamma \hbox{ for all } t\geq 0 \hbox{ and } \lim_{t\to \infty} |\zeta(t)-\zeta_{\rm r}(t)|=0.$$   \hfill $\End$ 
\end{Assn}

Through the controller in Assumption~\ref{Ass:controller}, we propose the tracking controller for $\HH\pp$ as
\begin{equation*}
u = u\cc(u\rr,r,x) := u^\psi\cc(u\rr,\psi(r),\psi(x)).
\end{equation*}
Then, the solution $x(t)$ of the closed-loop system starting near $r(0)$ is well-defined for all $t\geq0$ by (C2) and (C4). In addition, it is unique since, by (C3), the flow map $f(x,u\cc^\psi(u_r(t),\psi(r(t)),\psi(x)))$ is locally Lipschitz with respect to $x$ and piecewise continuous in $t$ and, by (C2), the solution must jump on $\DD$.

\begin{Thm}\label{Thm:contoller Thm}
Under Assumptions~\ref{Ass:standing ass},~\ref{Ass: basic tracking}, and~\ref{Ass:controller}, there exist an open set $\VV$ in $\CC$ containing $r(0)$, and a class-$\KK$ function $\alpha$ satisfying that: for $x_0 \in \VV $ and for any given $\epsilon >0$, there exists $T\geq0$ such that
 \begin{align}\label{eq: tracking result of thm}
 |x(t)-r(t)|<\epsilon \qquad\text{for all $t\in\tau_\alpha(\epsilon)\cap(T,\infty)$} 
 \end{align}
where $\tau_\alpha(\epsilon) = \R_{\geq0} \bs \cup_{i=0}^{N} \BB_{\tau_i}(\alpha(\epsilon))$ with $\tau_i$ being the jump time corresponding to $r(t)$ and $\tau_0=0$.
\end{Thm}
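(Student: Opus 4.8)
The plan is to transfer the tracking guarantee for the glued system (C4) back to the original domain via the inverse gluing function, mirroring the structure of the proof of Theorem~\ref{Theorem: State Estimation}. First I would fix $\epsilon>0$, shrink it if necessary so that the $\epsilon$-neighborhoods $\OO_\DD(\epsilon)$ and $\OO_\GG(\epsilon)$ of $\DD$ and $\GG$ (now taken relative to the compact set $\RR$ rather than $\EE$) are disjoint; this is possible by (C2) and (A4). Since (C2) gives a uniform bound $\mu$ on $\nabla r_\DD\cdot f$ and $\nabla r_\GG\cdot f$ along $\RR$ for \emph{all} inputs, the argument of Lemma~\ref{Lem: Existence of alpha and beta} applies to the reference trajectory $r(t)$ (which lives in $\RR$): there is a class-$\KK$ function $\alpha$ such that $r(t)\notin\OO_\GG(\epsilon)\cup\OO_\DD(\epsilon)$ for all $t\in\tau_\alpha(\epsilon)$. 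This produces the $\alpha$ claimed in the statement.

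Next I would set up the Lipschitz-type bounds from Lemma~\ref{Lem: local embedding of GF}. The sets $\RR\bs\OO_\DD(\epsilon)$ and $\RR\bs\OO_\GG(\epsilon)$ are compact and miss $\DD$ and $\GG$ respectively, so the lemma yields constants $L_1,L_2$ with $|x-x'|\le L_j|\psi(x)-\psi(x')|$ on each; put $L:=\max(L_1,L_2)$. The subtlety compared with the observer case is that here I also need to \emph{get into} those sets: I must ensure both $r(t)$ and $x(t)$ lie in $\RR$ and, for $t\in\tau_\alpha(\epsilon)$, in the appropriate punctured set. For $r(t)$ this is (C2) plus the previous paragraph. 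For $x(t)$, I would use (C4): given the target accuracy, choose $\gamma$ small enough (depending on $\epsilon$, $L$, and a uniform modulus of continuity of $\psi^{-1}$ on a compact neighborhood of $\psi(\RR)$ in $\CC^\psi$) so that $|\zeta(t)-\zeta\rr(t)|<\gamma$ for all $t\ge0$ forces $x(t)=\psi^{-1}(\zeta(t))$ to stay within a compact tube around $\RR$ inside $\CC$, and in particular $x(t)$ cannot enter the ``wrong'' neighborhood: when $r(t)\notin\OO_\DD(\epsilon)$ one gets $x(t)\in\RR'\bs\OO_\DD(\epsilon/2)$ for a slightly larger compact $\RR'\supset\RR$, and symmetrically for $\GG$. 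This is exactly where the open neighborhood $\VV$ of the statement comes from: take $\VV:=\psi^{-1}(\VV^\psi)$ with $\VV^\psi$ the neighborhood supplied by (C4) for this $\gamma$, intersected with a neighborhood on which $\psi^{-1}$ is well-behaved; then $x_0\in\VV$ implies $\zeta(0)=\psi(x_0)\in\VV^\psi$.

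Finally, the convergence part of (C4) gives $T\ge0$ with $|\zeta(t)-\zeta\rr(t)|<\epsilon/L$ for $t>T$. For $t\in\tau_\alpha(\epsilon)\cap(T,\infty)$, both $r(t)$ and $x(t)$ lie in the same one of the two punctured compact sets, so the Lipschitz bound gives $|x(t)-r(t)|\le L_j|\psi(x(t))-\psi(r(t))| = L_j|\zeta(t)-\zeta\rr(t)| < L_j\epsilon/L \le \epsilon$, which is \eqref{eq: tracking result of thm}.

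The main obstacle I expect is the bookkeeping in the second paragraph: unlike the observer proof, where $\hat x(t)$ was \emph{defined} to lie in the compact set $\EE$ via the projection $\Pi_{\psi(\EE)}$, here the plant state $x(t)$ is only controlled implicitly through $\zeta(t)$, so I must convert the smallness of $|\zeta(t)-\zeta\rr(t)|$ into the statements ``$x(t)\in\CC$ for all $t$'' (already granted by the discussion preceding the theorem, via (C2) and (C4)) and ``$x(t)$ avoids the wrong boundary neighborhood.'' Making the latter quantitative requires a uniform continuity / compactness argument for $\psi^{-1}$ near $\psi(\RR)$, and one must be slightly careful that the enlarged compact set $\RR'$ on which Lemma~\ref{Lem: local embedding of GF} is invoked still satisfies $\RR'\cap\DD=\varnothing$ or $\RR'\cap\GG=\varnothing$ on the relevant pieces — this is arranged by choosing $\gamma$ (hence $\VV$) small enough at the outset. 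Everything else is a direct transcription of the proof of Theorem~\ref{Theorem: State Estimation}.
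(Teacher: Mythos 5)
Your proposal follows essentially the same route as the paper's proof: obtain $\alpha$ by applying Lemma~\ref{Lem: Existence of alpha and beta} to $r(t)$ on a compact set (using the input-uniform transversality in (C2)), let (C4) drive $|\zeta(t)-\zeta\rr(t)|\to 0$, and transfer back through the Lipschitz constants of Lemma~\ref{Lem: local embedding of GF} exactly as in Theorem~\ref{Theorem: State Estimation}, with $\VV=\psi^{-1}(\VV^\psi)$.

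One justification in your second paragraph would fail as written: you cannot invoke a ``uniform modulus of continuity of $\psi^{-1}$'' on a neighborhood of $\psi(\RR)$ to keep $x(t)$ in a compact tube around $\RR$, because $\psi^{-1}$ (as defined in \eqref{eq: Inverse Function of Gluing Function}) is discontinuous across $\psi(\DD)$ --- points on opposite sides of $\psi(\DD)$ pull back to the far-apart sets near $\DD$ and near $\GG$, and the interesting case is precisely when $r(t)$ crosses $\psi(\DD)$. The paper avoids this by taking the compact set $\WW^\psi=\{\zeta\in\CC^\psi : d_{\psi(\RR)}(\zeta)\leq\gamma^*\}$, keeping $\zeta(t)$ inside it via (C4), and then using the properness condition (G5) to conclude that $\WW:=\psi^{-1}(\WW^\psi)$ is a compact subset of $\CC$ containing both $x(t)$ and $r(t)$; no metric closeness of $x(t)$ to $\RR$ is needed. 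Relatedly, your effort to keep $x(t)$ out of the ``wrong'' boundary neighborhood is unnecessary: once $x(t)\in\WW=(\WW\bs\OO_\DD(\epsilon))\cup(\WW\bs\OO_\GG(\epsilon))$ (using $\OO_\DD(\epsilon)\cap\OO_\GG(\epsilon)=\varnothing$) and $r(t)$ lies in both punctured sets for $t\in\tau_\alpha(\epsilon)$, the two points automatically share one of them, which is the same disjointness trick used at the end of the proof of Theorem~\ref{Theorem: State Estimation}. With the compactness argument replaced by (G5), your proof goes through and coincides with the paper's.
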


\begin{proof}
Since $\psi$ is continuous and $\RR$ is compact, the image $\psi(\RR)\subset \CC^\psi$ is also compact. 
Then, the set $\WW^\psi:=\{\zeta\in \CC^\psi : d_{\psi(\RR)}(\zeta) \leq \gamma^* \}$ becomes compact with sufficiently small $\gamma^* > 0$.
By (C4), there is an open set $\VV^\psi$ such that $\zeta(t) \in \WW^\psi$ for all $t \ge 0$ when $\zeta(0) \in \VV^\psi$.
Hence, by (G5), we have the compact set $\WW:=\psi^{-1}(\WW^\psi)$ containing $x(t)$ and $r(t)$ for all $t\geq0$.

Applying Lemma~\ref{Lem: Existence of alpha and beta}, we obtain a class-$\KK$ function $\alpha$ satisfying that, for a sufficiently small $\epsilon>0$, $r(t) \notin \{x\in\WW : d_{\GG\cap \WW}(x) < \epsilon\} \cup \{x\in\WW : d_{\DD\cap \WW}(x) < \epsilon\}$, $\forall t \in \tau_\alpha(\epsilon)$.
Let $\VV:=\psi^{-1}(\VV^\psi)$. 
Then, it is open in $\CC$ because $\psi$ is continuous and, for $x_0 \in \VV$, $x(t)$ belongs to the compact set $\WW$. 
Note that $r(t)$ also stays in the compact set $\WW$. 
Therefore, one obtains \eqref{eq: tracking result of thm} similarly as in the proof of Theorem~\ref{Theorem: State Estimation}.
\end{proof}

\subsection{Controller Design}\label{Section: Nomal form}

Finding gluing functions of $\HH\pp$ satisfying \eqref{eq: matching condition for tracking} may not be a trivial task. 
If \eqref{eq: matching condition for tracking} does not hold, the system in the glued domain may have discontinuous vector fields (even though state jumps disappear), and may be regarded as a (state-triggered) switched system. 
Nevertheless, there is a possibility to counteract the residual discontinuity of the vector field by feedback control in some cases, which is reminiscent of the fact that a feedback cancels residual nonlinearity in the feedback linearization technique for nonlinear systems.
This possibility is exploited in this subsection.

Consider a plant whose flow map is modeled as an input affine form:
\begin{align*}
f(x,u) = a(x)+b(x)u
\end{align*} 
where $a$ and $b$ are smooth. 
Then, equation \eqref{eq: matching condition for tracking} holds when the conditions
\begin{alignat}{2}
\dd\psi(x)a(x)&=\dd\psi(g(x))a(g(x))\quad &&\hbox{for all $x\in\DD$,} \label{eq: matching condition for tracking f}\\
\dd\psi(x)b(x)&=\dd\psi(g(x))b(g(x))\quad &&\hbox{for all $x\in\DD$,} \label{eq: matching condition for tracking b}
\end{alignat}
are satisfied.
However, since there are some systems that do not satisfy these conditions, let us consider a relaxation of them through a feedback.
Suppose that there exist a $C^1$ matrix function $\gamma:\CC \to \R^{p\times p}$ such that the matrix $\gamma(x)$ is invertible for all $x \in \CC$, and a $C^1$ function $\kappa:\CC\to \R^p$ such that, for all $x\in\DD$,
\begin{align}
&\dd \psi(x)b(x)\gamma(x) = \dd \psi(g(x))b(g(x))\gamma(g(x)), \label{eq: relaxed matching condition for tracking b} \\
&\dd \psi(x)a(x) + \dd \psi(x)b(x)\gamma(x)\kappa(x) \notag \\
&\qquad = \dd \psi(g(x))a(g(x)) + \dd \psi(x)b(x)\gamma(x) \kappa(g(x)) . \label{eq: relaxed matching condition for tracking a}
\end{align}
It is seen in \eqref{eq: relaxed matching condition for tracking b} and \eqref{eq: relaxed matching condition for tracking a} that the mismatches yielding violation of \eqref{eq: matching condition for tracking f} and \eqref{eq: matching condition for tracking b} are absorbed in $\gamma$ and $\kappa$.
The glued system is written as
\begin{align*}
\dot \zeta &= \dd\psi(\psi^{-1}(\zeta)) a(\psi^{-1}(\zeta)) + \dd\psi(\psi^{-1}(\zeta)) b(\psi^{-1}(\zeta)) u \\
&=: a^{\psi}(\zeta) + b^{\psi}(\zeta) u
\end{align*}
in which, $a^{\psi}$ and $b^{\psi}$ are not continuous unless \eqref{eq: matching condition for tracking f} and \eqref{eq: matching condition for tracking b} hold.
Now, consider a feedback control
\begin{equation}\label{eq:v}
u = \gamma(\psi^{-1}(\zeta)) ( v + \kappa(\psi^{-1}(\zeta)) )
\end{equation}
where $v$ is a new input.
Then, it can be shown that, in the closed-loop system
\begin{align}
\dot \zeta &= a^{\psi}(\zeta) + b^{\psi}(\zeta) \gamma(\psi^{-1}(\zeta)) \kappa(\psi^{-1}(\zeta)) + b^{\psi}(\zeta) \gamma(\psi^{-1}(\zeta)) v \notag \\
&=: a^{\psi}_{\kappa\gamma}(\zeta) + b^{\psi}_{\gamma}(\zeta) v, \qquad \zeta \in \CC^\psi, \label{eq: feedback continuous form}
\end{align}
two functions $a^{\psi}_{\kappa\gamma}$ and $b^{\psi}_{\gamma}$ are continuous with respect to $\zeta$, thanks to \eqref{eq: relaxed matching condition for tracking b}, \eqref{eq: relaxed matching condition for tracking a}, Lemma \ref{Lem: local embedding of GF}, and (G3).
Note that \eqref{eq:v} itself may not be continuous since $\gamma(\psi^{-1}(\zeta))$ and $\kappa(\psi^{-1}(\zeta))$ may have discontinuity at $\zeta \in \psi(\DD)$. 
It is seen that $\zeta\rr(t)$ is a solution to \eqref{eq: feedback continuous form} when $v = (\gamma(\psi^{-1}(\zeta\rr(t))))^{-1}(u\rr(t)-\kappa(\psi^{-1}(\zeta\rr(t)))) =: v\rr(t)$. 

Since $a_{\kappa\gamma}^\psi$ and $b^\psi_\gamma$ are continuous, we proceed as in the previous subsection to find a control law $v^\psi_{\C}$ for \eqref{eq: feedback continuous form} yielding the closed-loop system
$$\dot \zeta = a_{\kappa\gamma}^\psi(\zeta) + b^\psi_\gamma(\zeta) v^\psi_{\C}(v\rr,\zeta\rr,\zeta).$$
Then, the tracking controller for $\HH\pp$ is obtained as
\begin{align}\label{System: no jump controller}
\begin{split}
&u_{\C}(u\rr,r,x) = \gamma(x) \left( v^\psi_{\C}(v\rr,\psi(r),\psi(x)) + \kappa(x) \right) \\
&=\gamma(x) \left( v^\psi_{\C}\left( \gamma(r)^{-1} (u\rr\! - \kappa(r)), \psi(r), \psi(x)\right) + \kappa(x) \right) .
\end{split}
\end{align}

Because of discontinuities in the feedback of \eqref{eq:v}, the feedback control $v\cc^\psi$, instead of \eqref{eq:v}, should play the role of $u\cc^\psi$ in (C3) and Theorem \ref{Thm:contoller Thm}.

\begin{Exam}\label{Example: Number}
Consider a system 
\begin{align}\label{Example: linear}
\begin{bmatrix} \dot x_1 \\ \dot x_2 \end{bmatrix} &= \begin{bmatrix}
a_{11} &a_{12}\\a_{21} &a_{22}
\end{bmatrix} \begin{bmatrix} x_1 \\ x_2 \end{bmatrix} + \begin{bmatrix} 0 \\ b \end{bmatrix} u =: Ax + Bu~~~~~~
\end{align}
\vspace{-.6cm}	
\begin{alignat}{2}
& &&\hbox{ when } x\in \CC=\{x\in \R^2: (x_1\geq 0) \;\&\; (|x|>0)\},\notag \\
x^+&=-x &&\hbox{ when } x\in \DD=\{x\in \CC: (x_1=0) \;\&\; (x_2 < 0)\},\notag
\end{alignat}
where $a_{12}>0$ and $b\neq0$. 
Suppose that the reference $r(t)$ satisfies Assumption~\ref{Ass: basic tracking}.

In order to facilitate the search for a gluing function, we consider an alternative system in $\R^3$ whose behavior looks the same as \eqref{Example: linear}.
The system is given by, with $\bar x = (x, p)$ where $p$ is a mode variable taking value of either $+1$ or $-1$, 
\begin{alignat}{2}
\dot {\bar x} &=\bar f(\bar x)+\bar b(\bar x)u&\ \label{Example: linear-2} \\
&:= \begin{bmatrix}
	A & 0\\
	0 & 0
\end{bmatrix}\bar x+\begin{bmatrix}
	B\\0
\end{bmatrix}
u \qquad&&\hbox{when } \bar x\in \bar\CC:=\CC\times \{-1,1\}, \notag\\
	{\bar x}^+&=\bar g (\bar x):=-\bar x &&\hbox{when } \bar x\in \bar\DD:=\DD\times \{-1,1\}.\notag 
\end{alignat}
Note that the first two elements of the state trajectory of \eqref{Example: linear-2} and the state trajectory of \eqref{Example: linear} coincide. 
		
We take the reference $\bar r(t):=(r(t),p\rr(t))$, which is a state trajectory of \eqref{Example: linear-2} with $\bar r(0) := (r(0),1)$ and $u = u\rr(t)$. 
For this system, we take a gluing function $\psi: \bar{\CC} \to \R^2$ defined as $\psi(\bar x)=px=:\zeta$.
With this gluing function, the condition \eqref{eq: matching condition for tracking f} holds, but the condition \eqref{eq: matching condition for tracking b} does not (indeed, for $\bar x \in \bar\DD$, $\dd\psi(\bar x)\bar b(\bar x)=pB$ and $\dd\psi(\bar g(\bar x))\bar b(\bar g(\bar x))=-pB$ are not the same).
	
Take $\gamma(\bar x) = p$ (and $\kappa(\bar x) = 0$). 
Note that $\gamma(\bar x)$ is continuously differentiable and non-zero for all $\bar x \in \bar \CC$. 
Then, because $\dd\psi(\bar x)\bar b(\bar x)\gamma(\bar x)
=\dd\psi(\bar g(\bar x))\bar b(\bar g(\bar x))\gamma(\bar g(\bar x))=B$ for $\bar x \in \bar\DD$, \eqref{eq: relaxed matching condition for tracking b} is satisfied (and \eqref{eq: relaxed matching condition for tracking a} as well).

We obtain that $\bar a^\psi(\zeta) = pAx = A\zeta$. 
In addition, since the inverse gluing function in \eqref{eq: Inverse Function of Gluing Function} is obtained as
$$\psi^{-1}(\zeta):= \overline{\rm Sgn} (\zeta)\begin{bmatrix}
	 \zeta_1\\ \zeta_2\\ 1
\end{bmatrix} \; \hbox{for $\zeta=(\zeta_1, \zeta_2)\in\bar\CC^\psi= \R^2 \bs \{0_2\}$}$$
where $\overline{\rm Sgn}(\zeta)$ is ${\zeta_1}/{|\zeta_1|}$ if $\zeta_1 \neq 0$ and ${\zeta_2}/{|\zeta_2|}$ otherwise, we have that $\gamma(\psi^{-1}(\zeta)) = \overline{\rm Sgn} (\zeta)$ and $\bar b^\psi(\zeta) = \overline{\rm Sgn}(\zeta) B$. 
Then, the glued system is given by
$$\dot {\zeta} = A \zeta+B\overline{\rm Sgn}(\zeta)u \quad \hbox{for $\zeta\in\bar \CC^\psi,$}$$
and becomes a continuous system by \eqref{eq:v}; $u = \overline{\rm Sgn} (\zeta) v$.

Since $(A,B)$ is controllable, we design a state tracking control law:
$$v = v^\psi\cc(v\rr,\zeta\rr,\zeta) = K(\zeta-\zeta\rr) + v\rr$$
where $A+BK$ is Hurwitz.
Then, through \eqref{System: no jump controller}, we finally obtain the local tracking control law for \eqref{Example: linear-2}:
\begin{align*}
u &= u\cc(u\rr,\bar r,\bar x)= \gamma(\bar x) v^\psi\cc(v\rr,\psi(\bar r),\psi(\bar x)) \\
&= K ( x - p\rr(t) r(t) p ) + p\rr(t) u\rr(t) p.
\end{align*}
We illustrate a simulation result in Figure~\ref{Figure: EX:linear system} when
\begin{align*}
A &=\begin{bmatrix}
	0 &1 \\
	0 & 0
\end{bmatrix}, \quad 
B=\begin{bmatrix}
	0\\1
\end{bmatrix}, \quad 
K = [-0.6, ~ -1.55], \\
u\rr(t) &= \left\{\begin{array}{ll} -3 &\text{if } t\;{\rm mod}\;10 \in [0,4), \\ -2&\text{otherwise,}
\end{array} \right .  
\end{align*}
with $r(0)=(0,6)$ and $x_0=(3,8)$.
\hfill $\End$
\end{Exam}

\begin{figure}[t]
\centering
\includegraphics[width=.85\columnwidth]{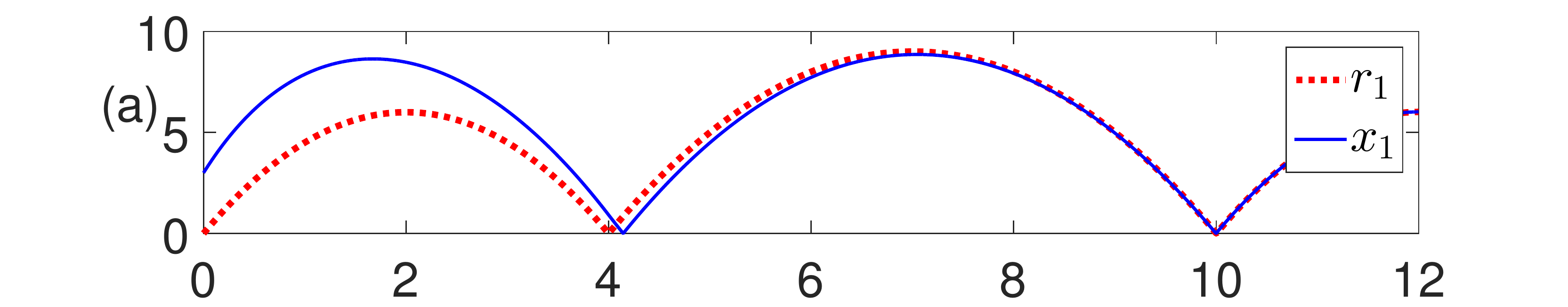}
\includegraphics[width=.85\columnwidth]{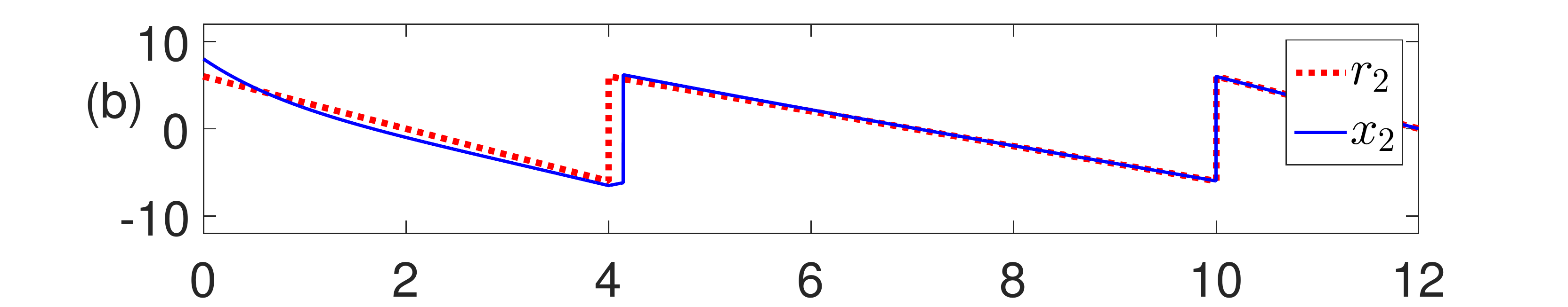}
\includegraphics[width=.85\columnwidth]{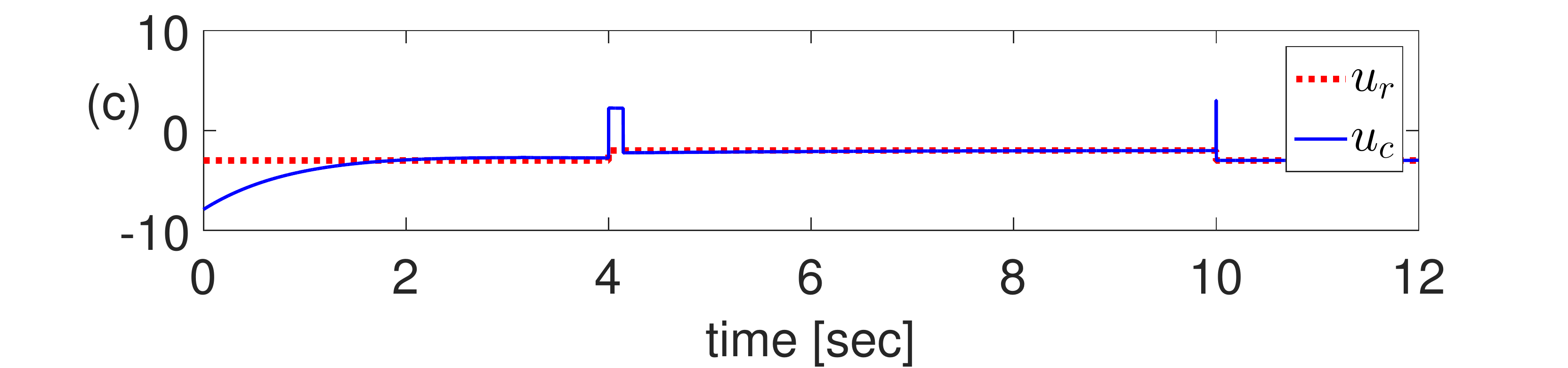}
\caption{Figures (a) and (b) depict the reference $r_1$ and $r_2$ (red) and the plant state $x_1$ and $x_2$ (blue), respectively, and figure (c) depicts the reference input $u\rr$ (red) and the control input $u\cc$ (blue).}
\label{Figure: EX:linear system}
\end{figure}

\section{A Remark}

Even if we presented a new way of constructing state observers and tracking controllers for hybrid systems, there lacks a systematic way to construct the gluing functions satisfying 
\eqref{eq: matching condition of estimation} or \eqref{eq: matching condition for tracking}.
For particular cases, some concrete construction may be developed, as done in Theorem \ref{Thorem: immersion observer}.

\vfill

\pagebreak

\appendix

\noindent{\em Proof of Lemma~\ref{Lem: local embedding of GF}}:
If $\MM\cap \DD=\varnothing$, then $\MM\subset \CC\bs\DD$ and, by (G2), $\psi$ is injective on $\MM$. 
Next we consider the case when $\MM\cap \GG=\varnothing$. 
Suppose that there are $x,y\in \MM\subset \CC\bs \GG$ such that $x\neq y$ and $\psi(x)=\psi(y)$. 
Due to (G2), at least one should be in $\DD$. 
Without loss of generality, we have that $x\in \DD\cap (\CC\bs\GG)$ and $y\in \CC\bs\GG$. 
We first consider the case when $y$ is also included in $\DD$. 
Then, by (G1), it holds that $\psi(g(x))=\psi(g(y))$, which implies that $x=y$ because $g$ is injective on $\DD$ and $\psi$ is injective on $g(\DD)=\GG \subset \CC\bs\DD$. 
This is a contradiction. 
Secondly, consider the case when $y$ is not included in $\DD$. 
Then, it follows from (G1) that $\psi(x)=\psi(x')=\psi(y)$ where $x':=g(x)\in \GG$ and $x'\neq y\in \CC \bs (\DD\cup\GG)$, which is also a contradiction because $x',y\in \CC\bs\DD$ and, by (G2), $\psi$ is injective on $\CC\bs\DD$. 
Therefore, $\psi$ is injective on $\MM$ whenever $\MM\cap \DD=\varnothing$ or $\MM\cap \GG=\varnothing$.

Next, to show
\begin{align*}
	|x-y|\leq L|\psi(x)-\psi(y)|~\hbox{for all $(x,y)\in \MM \times \MM$},
\end{align*} it is enough to check that
$$\inf_{\substack {x\neq y\\ x,y \in \MM} } \frac{|\psi(x)-\psi(y)|}{|x-y|}>0.$$ 
Suppose that there exist sequences $\{x_i\}$ and $\{y_i\}$ in $\MM$ such that $x_i \not = y_i$ and 
\begin{align}\label{eq: lemma2 pf}
\lim_{i\to \infty} \frac{|\psi(x_i)-\psi(y_i)|}{|x_i-y_i|}=0.
\end{align}
Since $\MM\subset \R^n$ is compact, by Bolzano-Weierstrass theorem, without loss of generality, we may assume that $\{x_i\}$ and $\{y_i\}$ converge to some points $x'$ and $y'$ in $\MM$, respectively. 
If $x'\neq y'$, then $\psi(x')=\psi(y')$, which contradicts to injectivity of $\psi$ on $\MM$. 
Therefore, $x'= y'$. 
Since $\psi$ is of class $C^1$, we have
$$\lim_{i\to\infty} \frac{|\psi(x_i)-\psi(y_i)-\dd\psi(x')(x_i-y_i)|}{|x_i-y_i|}=0.$$
Then, from \eqref{eq: lemma2 pf}, we have that 
\begin{align}\label{eq: seq nulling}
\lim_{i\to\infty} |\dd\psi(x')w_i|=0,
\end{align}
where $w_i:=(x_i-y_i)/|x_i-y_i|$ and $|w_i|=1$.
Since $r_\CC$ is smooth and $r_\CC(x)=0_{n-k}$ for all $x\in\CC$, it follows that
\begin{align*}
\lim_{i\to\infty} \frac{|r_\CC(x_i)-r_\CC(y_i)+\dd r_\CC(x')(x_i-y_i)|}{|x_i-y_i|}&=\lim_{i\to\infty}|\dd r_\CC(x')w_i|\\
&=0.
\end{align*}
Take $v_i:=w_i- \dd r_\CC(x')^\top(\dd r_\CC(x')\dd r_\CC(x')^\top)^{-1}\dd r_\CC(x')w_i$. Note that $\dd r_\CC(x')\dd r_\CC(x')^\top$ is invertible because $\dd r_\CC(x)$ has full row rank for every $x\in\CC$ . 
Then, $v_i\in \ker(\dd r_\CC(x'))$ and $|v_i-w_i|\to0$ as $i\to\infty$.
Moreover, since $|v_i|\to1$ as $i\to\infty$, there exists a subsequence $\{v_{i_j}\}$ such that $\lim _{j\to \infty}v_{i_j}=v^* \neq 0_{n}$ for some $v^*\in \ker(\dd r_\CC(x'))$. 
By \eqref{eq: seq nulling}, we obtain that
\begin{align*}
\!|\dd\psi(x')v^*|&=\!\lim_{j\to\infty}\!|\dd\psi(x')v_{i_j}|\\
&\leq\!\lim_{j\to\infty} \!|\dd\psi(x')w_{i_j}\!|\!+\!\!\lim_{j\to\infty}\!|\dd\psi(x')(v_{i_j}\!\!-w_{i_j})|=0,
\end{align*}
which is a contradiction to (G4). 
\hfill $\blacksquare$

\noindent{\em Proof of Lemma~\ref{Lem: Existence of alpha and beta}}:
Define $w(x):=\nabla r_\GG(x)\cdot f(x)$ for $x\in\CC$. Then, by (E2), $w(x)>0$ for all $x\in\GG$. Moreover, $w(\cdot)$ is uniformly continuous on $\EE$, because $r_\GG$ and $f$ are smooth and $\EE$ is compact. Consequently, $\mu:=\inf_{x\in\GG\cap \EE}w(x)$ is positive by the compactness of $\GG\cap \EE$ and there exists $\epsilon_1>0$ such that if $(x_1,x_2)\in\EE\times\EE$ and $|x_1-x_2|<\epsilon_1$, then 
\begin{align*}
|w(x_1)-w(x_2)|<\frac{\mu}{2}.
\end{align*}
If $x\in \OO_{\GG}(\epsilon_1)$, by the definition, there exists $x_\GG\in\GG\cap\EE$ such that $|x-x_\GG|< \epsilon_1$. From the above equation, we have that
\begin{align*}
-\frac{\mu}{2}< w(x)-w(x_\GG)< \frac{\mu}{2}.
\end{align*}   
Then, by the definition of $\mu$, it follows that $w(x)>\mu/2$ for all $x\in \OO_{\GG}(\epsilon_1)$.   

Next, since $r_\GG$ is smooth, it is Lipschitz on the compact set $\EE$ with a Lipschitz constant $L>0$. Moreover, by (A4) and (E2), it is satisfied that $r_\GG(x)>0$ if $x\in \EE\bs\GG$ and  $r_\GG(x)=0$ if $x\in \GG\cap\EE$. From this fact, we obtain that, for all $(x,x_\GG)\in  \EE\times (\GG\cap\EE)$,
\begin{align*}
r_\GG(x)=|r_\GG(x)|=|r_\GG (x)- r_\GG(x_\GG)|&\leq L|x-x_\GG|. 
\end{align*}
Therefore, if $x\in \OO_{\GG}(\epsilon)$, then $r_\GG(x)< L\epsilon$. By contraposition, if $r_\GG(x)\geq  L \epsilon$, then $x\notin \OO_{\GG}(\epsilon)$.  

For a state trajectory $x(t)$ starting on $\EE$, let $\tau$ be its hybrid time trajectory. Then, we have that $x(\tau_i)\in\EE\cap\GG$ and $r_\GG(x(\tau_i))=0$ for $i=1,\dots,N$. Let $\alpha_1(\epsilon):=(2L/\mu)\epsilon$. We first claim that, for all $\epsilon < \epsilon_1$, $x(t)$ escapes from $\OO_{\GG}(\epsilon)$ at least once before $t=\tau_i+\alpha_1({\epsilon})$. Suppose that $x(t)$ remains in $\OO_{\GG}(\epsilon)$ for all $t\in[\tau_i,t_\epsilon]$ where $t_\epsilon\in [\tau_i+\alpha_1(\epsilon),\tau'_{i})$.
Then, since $x(t)\in \OO_{\GG}(\epsilon)\subset\OO_{\GG}(\epsilon_1)$ for $t\in [\tau_i,t_\epsilon]$, it holds that $w(x(t))>\frac{\mu}{2}$ and 
\begin{align*}
r_\GG(x(t_\epsilon))&=r_\GG(x(t_\epsilon))-r_\GG(x(\tau_i))=\int_{\tau_i}^{t_\epsilon}w(x(s))ds\\
&>\frac{\mu}{2}|t_\epsilon-\tau_i|\geq\frac{\mu}{2}\alpha_1(\epsilon)= L\epsilon.
\end{align*}
This implies that $x(t_\epsilon) \notin \OO_{\GG}(\epsilon)$, which is a contradiction. 

Since $r_\GG(x(t))$ increases when $x(t)\in\OO_{\GG}(\epsilon_1)$, $x(t)$ cannot return to $\OO_{\GG}(\epsilon)$ again. 
Therefore, we obtain that $x(t) \notin \OO_\GG(\epsilon)$ for all $t \in [\tau_i + \alpha_1(\epsilon), \tau_i')$ and for all $i = 1,\dots,N$. 
For the case $i=0$, the same claim also holds, without loss of generality, regardless whether $x(\tau_0) \in \GG$ or not.
Therefore, it follows that 
$$x(t) \notin \OO_\GG(\epsilon) \quad \hbox{for all } t \in \bigcup_{i=0}^N[\tau_i+\alpha_1(\epsilon),\tau_i').$$

Similarly, there exist $\epsilon_2>0$ and class-$\KK$ function $\alpha_2$ such that, for $\epsilon<\epsilon_2$, 
$$x(t)\notin\OO_\DD(\epsilon)\hbox{ for all }
t\in\left\{
\begin{array}{ll}
\bigcup_{\,i=0}^{N\!-\!1}\,(\tau_i,\tau_i'-\alpha_2(\epsilon)]\cup(\tau_N,\infty)
\\
\hspace{2.6cm}\hbox{when $N<\infty$,} \\
\bigcup_{i\,=0}^N\,(\tau_i,\tau_i'-\alpha_2(\epsilon)]
\\
\hspace{2.6cm}\hbox{when $N=\infty$.}
\end{array}
\right.$$
Finally, take $\alpha(\cdot) := \max \left (\alpha_1(\cdot),\alpha_2(\cdot)\right )$ and consider $\epsilon<\min(\epsilon_1,\epsilon_2)$. Then, $x(t)\notin \OO_\GG(\epsilon)\cup\OO_\DD(\epsilon)$ for all $t\in\tau_\alpha(\epsilon)$.\hfill $\blacksquare$

\pagebreak
\noindent{\it Proof of Corollary~\ref{Cor:Graphcal sense}}:
If $x(t)$ has a finite number of jumps (i.e. $N<\infty$), then it follows from Theorem \ref{Theorem: State Estimation} for the case where $\epsilon=\epsilon^*$ that (a) and (b) are true by taking $T^*:=\max(\tau_N+\alpha(\epsilon^*), T)$ and $s:=t$. 
Now, we prove the case when $N=\infty$. 

(a): Let $\epsilon:= \min \big(\epsilon^*/4, \alpha^{-1}(\epsilon^*/2), \alpha^{-1}(\epsilon^*/(4M)) \big)$ where $M := \sup_{x \in \EE}{|f(x)|}$. 
If $\epsilon^*$ is sufficiently small, by Theorem \ref{Theorem: State Estimation}, there exists $T>0$ such that $|x(t)-\hat x(t)|<\epsilon$ for $t\in \tau_\alpha(\epsilon)\cap(T,\infty)$. 
Then, there is unique $j\in \langle\tau\rangle$ such that $T\in[\tau_j, \tau_{j}')$. 
We set $T_a:=\tau_{j}'$ and show that (a) holds when $T^*=T_a$. 
With $t>T_a$, three cases are considered; Case 1: $t \in (T_a,\infty)\cap\tau_\alpha(\epsilon)$, Case 2: $t\in [\tau_i, \tau_i+\alpha(\epsilon))$ for $i> j$, and Case 3: $t\in(\tau_i'-\alpha(\epsilon), \tau_i')$ for $i> j$.
In Case 1, take $s=t$. Then, it holds that $|(t,x(t))-(s,\hat x(s))|=|x(t)-\hat x(t)|<\epsilon\leq \epsilon^*/4 < \epsilon^*$. 
In Case 2, take $s=\tau_i+\alpha(\epsilon)$. 
Then, we have that $|s-t|\leq\alpha(\epsilon)\leq \epsilon^*/2$ and
\begin{align*}
&|x(t)-\hat x(s)| \\
&\leq |x(t)-x(\tau_i+\alpha(\epsilon))|+|x(\tau_i+\alpha(\epsilon))-\hat x(\tau_i+\alpha(\epsilon))| \\ 
&< M\alpha(\epsilon)+\epsilon \leq\frac{\epsilon^*}{2}. 
\end{align*}
Therefore, it follows that $|(t,x(t))-(s,\hat x(s))|<\epsilon^*.$ Similarly, in Case 3, it is satisfied that $|(t,x(t))-(s,\hat x(s))|<\epsilon^*$ when we take $s=\tau_i-\alpha(\epsilon)$.

(b): Consider the sufficiently small $\epsilon^*$ satisfying the condition for $\epsilon$ in Theorem~\ref{Theorem: State Estimation}, (i.e., $\OO_\DD(\epsilon^*) \cap \OO_\GG(\epsilon^*) = \varnothing$). 
As shown in the proof of Theorem~\ref{Theorem: State Estimation}, by Lemma~\ref{Lem: local embedding of GF}, we can take $L_1(\epsilon^*)>0$ and $L_2(\epsilon^*)>0$ such that $|x-x'|\leq L_1|\psi(x)-\psi(x')|$ for all $x,x'\in \EE\bs\OO_\DD(\epsilon^*)$ and $|x-x'|\leq L_2|\psi(x)-\psi(x')|$ for all $x,x'\in \EE\bs\OO_\GG(\epsilon^*)$. 

Let $\epsilon := \min\big (\epsilon^*, \alpha^{-1}(\epsilon^*/4), \alpha^{-1}(\epsilon^*/(4LM^\psi)), \alpha^{-1}(\epsilon^*/$ $(2M)) \big)$ where $M^\psi := \max_{x \in \EE}{|\dd \psi(x) f(x)|}$ and $L := 2\cdot\max(L_1,L_2)$. 
Then, by Theorem \ref{Theorem: State Estimation}, there exist $T_\zeta>0$ and $T>0$ such that $|\zeta(t)-\Pi_{\psi(\EE)}(\hat\zeta(t))| < 2\gamma = \epsilon^*/(2L)$ for $t>T_\zeta$ and $|x(t)-\hat x(t)|<\epsilon$ for $t\in \tau_\alpha(\epsilon)\cap(T,\infty)$. 
There exists unique $j\in \langle\tau\rangle$ such that $\max(T_\zeta, T)\in[\tau_j, \tau_{j}')$. 
We set $T_b := \tau_{j}'+\alpha(\epsilon)$ and show that (b) holds when $T^*=T_b$. 
Again, with $t>T_b$, we consider three cases; Case 1: $t \in  (T_b,\infty)\cap\tau_\alpha(\epsilon)$, Case 2: $t \in (T_b,\infty)\bs \tau_\alpha(\epsilon)$ and $\hat x(t) \in \EE\bs\OO_\DD(\epsilon^*)$, and Case 3: $t \in (T_b,\infty)\bs \tau_\alpha(\epsilon)$ and $\hat x(t) \in \EE\bs\OO_\GG(\epsilon^*)$.
In Case 1, take $s=t$. 
Then, $|(s,x(s))-(t,\hat x(t))|=|x(t)-\hat x(t)|<\epsilon\leq\epsilon^*$. 
In Case 2, let $s=\tau_{i^*}+\alpha(\epsilon)$ where $i^*$ is the positive integer (larger than $j+1$) such that $t \in (\tau_{i^*} - \alpha(\epsilon), \tau_{i^*} + \alpha(\epsilon))$. 
Since $|s-t| < 2\alpha(\epsilon)\leq \epsilon^*/2$, we just show that $|x(s)-\hat x(t)| \leq \epsilon^*/2$ for the proof of this case. 
Since $t>T_\zeta$, it is satisfied that
\begin{align*}
|\zeta(s)-\Pi_{\psi(\EE)}(\hat\zeta(t))| &\leq |\zeta(\tau_{i^*}\!+\!\alpha(\epsilon))\!-\!\zeta(t)|\! \\
&\hspace{-2cm}+\!|\zeta(t)\!-\!\Pi_{\psi(\EE)}\!(\hat\zeta(t))|< 2M^\psi\alpha(\epsilon)+\frac{\epsilon^*}{2L}\leq \frac{\epsilon^*}{L}.
\end{align*}
In addition, since $|x(s)-x(\tau_{i^*})| = |x(\tau_{i^*}+\alpha(\epsilon)) - x(\tau_{i^*})| \leq M \alpha(\epsilon) \leq \epsilon^*/2$ and $x(\tau_{i^*})\in\GG$, we have that $x(s)\in\OO_\GG(\epsilon^*)\subset \EE\bs\OO_\DD(\epsilon^*)$. 
Therefore, it follows that
\begin{align*}
|x(s)-\hat x(t)| &\leq L_1 |\psi(x(s))-\psi(\hat x(t))| \\
&= L_1 |\zeta(s)-\Pi_{\psi(\EE)}(\hat\zeta(t))|<\frac{\epsilon^*}{2}.
\end{align*}
In a similar way, we can prove Case 3 by setting $s := \tau_{i^*} - \alpha(\epsilon)$. 
Finally, let $T^* := \max(T_a, T_b)$, which completes the proof. \hfill $\blacksquare$

\medskip

\noindent{\it Proof of Theorem~\ref{Theorem: Lip conti}}:
By the vector field matching condition \eqref{eq: matching condition of estimation}, it is easy to show the continuity of $f^\psi$. We now show that $f^\psi$ is locally Lipschitz at every $\zeta\in\CC^\psi$. 

For any $\zeta\in\CC^\psi$, it follows from (G1) and (G2) that either $\zeta\in\psi(\CC\bs(\DD\cup\GG))$ or $\zeta\in\psi(\DD\cup\GG)=\psi(\DD)$. The former means $\zeta$ is a image of an interior point of $\CC$ while the latter means $\zeta$ is a image of a boundary point of $\CC$.

Suppose that $\zeta\in\psi(\CC\bs(\DD\cup\GG))$. Then, there is $x\in \CC\bs(\DD\cup\GG)$ such that $\psi(x)=\zeta$ where $x$ is an interior point. By the inverse function theorem, there exists an open neighborhood $U$ of $x$ in $\CC$ such that $U\cap \partial \CC =\varnothing$, $\psi|_U$ is injective, and $\psi|_U^{-1}$ is continuously differentiable. Note that $\psi(U)$ is open in $\R^m=\R^k$. Therefore, there exist $L(\zeta)>0$ and open neighborhood $V$ of $\zeta$ in $\psi(U)$ such that $|\psi|_U^{-1}(\zeta_1)-\psi|_U^{-1}(\zeta_2)|\leq L(\zeta)|\zeta_1-\zeta_2|$ for $\zeta_1,\zeta_2\in  V$. Since $\dd\psi$ and $f$ are continuously differentiable, there also exists $L_0(\zeta)>0$ such that
\begin{align*}
&|f^\psi(\zeta_1)-f^\psi(\zeta_2)| \\
&= |\dd\psi(\psi|_U^{-1}(\zeta_1)) f(\psi|_U^{-1}(\zeta_1))-\dd\psi(\psi|_U^{-1}(\zeta_2)) f(\psi|_U^{-1}(\zeta_2))| \\
&\leq L_0(\zeta)|\psi|_U^{-1}(\zeta_1) - \psi|_U^{-1}(\zeta_2)| \leq L_0(\zeta)L(\zeta)|\zeta_1-\zeta_2|,
\end{align*}
for any $\zeta_1,\zeta_2\in V\subset \CC^\psi$. 
Consequently, $f^\psi$ is locally Lipschitz at every $\zeta\in\psi(\CC\bs(\DD\cup\GG))$ on $\CC^\psi$. 

Next we consider the case when $\zeta\in\psi(\DD)$. 
Then, there is $(x_\DD, x_\GG)\in \DD \times \GG$ such that $x_\GG=g(x_\DD)$ and $\psi(x_\DD)=\psi(x_\GG)=\zeta$. 
By the inverse function theorem, there exist open neighborhoods $U_\DD$ of $x_\DD$ and $U_\GG$ of $x_\GG$ such that $U_\DD\times U_\GG \subset  \CC \times  \CC$ and $\psi_1^{-1}$ and $\psi_2^{-1}$ are continuously differentiable where $\psi_1:=\psi|_{U_\DD}$ and $\psi_2:= \psi|_{U_\GG}$. In addition, since $\psi_1^{-1}$ is continuously differentiable, there exists a Lipschitz constant $L_1(\zeta)>0$ on an open neighborhood $V_\DD$ of $\zeta$ in $\psi(U_\DD)$ and it follows that, for $\zeta_1,\zeta_2\in V_\DD$,
\begin{align*}
&|f^\psi(\zeta_1)-f^\psi(\zeta_2)|= \notag\\
&|\dd\psi(\psi_1^{-1}(\zeta_1)) f(\psi_1^{-1}(\zeta_1))-\dd\psi(\psi_1^{-1}(\zeta_2)) f(\psi_1^{-1}(\zeta_2))|
\\
&~~~~~~\leq L_0(\zeta)|\psi_1^{-1}(\zeta_1) - \psi_1^{-1}(\zeta_2)| \leq L_0(\zeta)L_1(\zeta)|\zeta_1-\zeta_2|.\notag
\end{align*}
Similarly, for the second case, we can take $L_2(\zeta)>0$ on an open neighborhood $V_\GG$ of $\zeta$ in $\psi(U_\GG)$ such that, for $\zeta_1,\zeta_2\in V_\GG$,
\begin{align*}
|f^\psi(\zeta_1)-f^\psi(\zeta_2)|\leq L_0(\zeta)L_2(\zeta)|\zeta_1-\zeta_2|.
\end{align*}
For sufficiently small $\delta>0$ such that $\BB_{\zeta}(\delta)\subset V_\DD \cup V_\GG$, if $\zeta_1,\zeta_2\in\BB_{\zeta}(\delta)$, without loss of generality, one of the following three cases holds:
\begin{itemize}
	\item $\psi^{-1}(\zeta_1)=\psi_1^{-1}(\zeta_1)\in V_\DD$ and $\psi^{-1}(\zeta_2)=\psi_1^{-1}(\zeta_2)\in V_\DD$.
	\item $\psi^{-1}(\zeta_1)=\psi_2^{-1}(\zeta_1)\in V_\GG$ and $\psi^{-1}(\zeta_2)=\psi_2^{-1}(\zeta_2)\in V_\GG$.
	\item $\psi^{-1}(\zeta_1)=\psi_1^{-1}(\zeta_1)\in V_\DD$ and $\psi^{-1}(\zeta_2)=\psi_2^{-1}(\zeta_2)\in V_\GG$.
\end{itemize}
For the first and second cases, we obtain that
\begin{align} \label{eq. Lipschitz Condition 1}
|f^\psi(\zeta_1)-f^\psi(\zeta_2)|\leq L_0(\zeta)\cdot\max(L_1(\zeta),L_2(\zeta))|\zeta_1-\zeta_2|.
\end{align}
Let us consider the last case. 
Since a sufficiently small neighborhood of $\zeta$ in $\CC^\psi$ is divided into two regions by $\psi(\DD)$ and the last case implies that $\zeta_1$ and $\zeta_2$ are placed at the different regions, there exists at least one $\zeta^*\in \psi(\DD)\cap l(\zeta_1,\zeta_2)$, where $l(\zeta_1,\zeta_2)$ is the line segment whose end points are $\zeta_1$ and $\zeta_2$. 
Then, we find $ x^*\in\DD$ such that $\psi_1( x^*)=\psi_2( g( x^*))=\zeta^*$ by (G1). 
Thus, it follows from \eqref{eq: matching condition of estimation} that
\begin{align}
&|f^\psi(\zeta_1)-f^\psi(\zeta_2)|\notag\\
&\leq |f^\psi(\zeta_1)\!-\!\dd \psi( x^*) f( x^*)|
+ |\dd\psi( g( x^*)) f( g( x^*))\!-\!f^\psi(\zeta_2) |\notag \\
&= |\dd\psi(\psi_1^{-1}(\zeta_1)) f (\psi_1^{-1}(\zeta_1))-\dd\psi(\psi_1^{-1}(\zeta^*))  f(\psi_1^{-1}(\zeta^*))|\notag\\
&\quad+ |\dd\psi(\psi_2^{-1}(\zeta^*))  f(\psi_2^{-1}(\zeta^*))-\dd\psi(\psi_2^{-1}(\zeta_2)) f(\psi_2^{-1}(\zeta_2))|\notag\\
&\leq L_0(\zeta)\left (|\psi_1^{-1}(\zeta_1)-\psi_1^{-1}(\zeta^*)|+|\psi_2^{-1}(\zeta^*)-\psi_2^{-1}(\zeta_2)|\right)\notag\\
&\leq L_0(\zeta)\left (L_1(\zeta)|\zeta_1-\zeta^*| + L_2(\zeta)|\zeta^*-\zeta_2|\right )\notag\\
&\leq L_0(\zeta)\cdot\max(L_1(\zeta),L_2(\zeta))|\zeta_1-\zeta_2|. \label{eq. Lipschitz Condition 3}
\end{align}
Consequently, by \eqref{eq. Lipschitz Condition 1}--\eqref{eq. Lipschitz Condition 3}, $f^\psi$ is locally Lipschitz at every $\zeta\in\psi(\DD)$ on $\CC^\psi$ with the Lipschitz constant $L_0(\zeta)\cdot\max(L_1(\zeta),L_2(\zeta))$.

Since $\EE$ is compact and $\psi$ is continuous, $\psi(\EE)$ is also compact. 
Therefore, it follows that $f^\psi$ is Lipschitz continuous on $\psi(\EE)$.
In the similar way, we can show that $h^\psi$ is also Lipschitz continuous on $\psi(\EE)$.\hfill $\blacksquare$

\end{document}